\newcommand{\F}{\mathbb{F}}
\newcommand{\E}{\mathop \mathbb{E}}
\newcommand{\N}{\mathbb{N}}
\newcommand{\R}{\mathbb{R}}
\newtheorem{theorem}{Theorem}[section]
\newtheorem{corollary}[theorem]{Corollary}
\newtheorem{lemma}[theorem]{Lemma}
\newtheorem{prop}[theorem]{Proposition}
\newtheorem{claim}[theorem]{Claim}
\newtheorem{fact}[theorem]{Fact}
\theoremstyle{definition}
\newtheorem{definition}[theorem]{Definition}
\title[A Criterion for Decoding on the BSC]
{A criterion for decoding on the binary symmetric channel} 
\author[Anup Rao and Oscar Sprumont]{}
\subjclass{94B05, 94B65, 94B70, 94B15.}
\keywords{Linear codes, transitive codes, doubly transitive codes, weight enumerator, error channel, decoding criterion, list decoding.}
\thanks{The first author was supported by NSF CCF-2131899. The second author was supported in part by NSERC PGSD3-545945-2020, NSF CCF-2131899, NSF CCF-1813135 and Anna Karlin's Bill and Melinda Gates Endowed Chair.}
\thanks{$^*$Corresponding author: Oscar Sprumont}
\begin{document}
\maketitle

\centerline{\scshape
Anup Rao$^{{\href{mailto:anuprao@cs.washington.edu}{\textrm{\Letter}}}}$
and Oscar Sprumont$^{{\href{mailto:osprum@cs.washington.edu}{\textrm{\Letter}}}*}$}

\medskip

{\footnotesize

 \centerline{School of Computer Science, University of Washington, USA}
}

\bigskip

\begin{abstract}
We present an approach to showing that a linear code is resilient to random errors. We use this approach to obtain decoding results for both transitive and doubly transitive codes. We give three kinds of results about linear codes in general, and transitive linear codes in particular.
\begin{enumerate}
\item We give a tight bound on the weight distribution of every transitive linear code $C \subseteq \F_2^N$:
$$\Pr_{c \in C}[\textnormal{wt}(c) = \alpha N] \leq 2^{-(1-h(\alpha)) \mathsf{dim}(C)}. $$
\item We give a criterion that certifies that a linear code $C$ can be decoded on the binary symmetric channel. Let $K_s(x)$ denote the Krawtchouk polynomial of degree $s$, and let $C^\perp$ denote the dual code of $C$. We show that bounds on $\E_{c \in C^{\perp}}[ K_{\epsilon N}(\textnormal{wt}(c))^2]$ imply that $C$ recovers from errors on the binary symmetric channel with parameter $\epsilon$. Weaker bounds can be used to obtain list-decoding results using similar methods. One consequence of our criterion is that whenever the weight distribution of $C^\perp$ is sufficiently close to the binomial distribution in some interval around $\frac{N}{2}$, $C$ is resilient to $\epsilon$-errors.
\item We combine known estimates for the Krawtchouk polynomials with our weight bound for transitive codes, and with known weight bounds for doubly transitive codes, to obtain list-decoding results for both these families of codes. In some regimes, our bounds for doubly transitive codes achieve the information-theoretic optimal trade-off between rate and list size.
\end{enumerate}
\end{abstract}

\section{Introduction}\label{intro}
In his seminal 1948 paper, Shannon laid out the bases of coding theory and introduced the concept of channel capacity, which is the maximal rate at which information can be transmitted over a communication channel \cite{shannon1948entropy}. The two channels that have received the most attention are the Binary Symmetric Channel (BSC), where each bit is independently flipped with some probability $\epsilon$, and the Binary Erasure Channel (BEC), where each bit is independently replaced by an erasure symbol with some probability $\epsilon$. Shannon's work initiated a decades-long search for explicit codes that can achieve high rates over a noisy channel.

\newpage
Explicit constructions of codes often have a lot of symmetry. In particular, many known constructions of codes are \emph{transitive}. The group of symmetries of a code is the subgroup $G$ of permutations $\pi:\{1,2,\dotsc, N\} \rightarrow \{1,2,\dotsc, N\}$ such that permuting the coordinates of each of the codewords using $\pi$ does not change the code. A code is transitive if for every two coordinates $i,j$, there is a permutation $\pi \in G$ with $\pi(i) = j$. A code is doubly transitive if for every $i \neq k$, $j \neq \ell$ there is a permutation $\pi \in G$ with $\pi(i) = j, \pi(k) = \ell$. Many known constructions of codes are \emph{cyclic}, and every cyclic code is transitive. Reed-Solomon codes, BCH codes and Reed-Muller codes are all transitive. In addition, Reed-Muller codes and extended primitive narrow-sense BCH codes are doubly transitive.

Using fundamental results from Fourier analysis about the influences of symmetric boolean functions \cite{kahn1988kkl, Talagrand94, Bourgain97} has led to a very successful line of work, with \cite{kudekar2016erasure} showing that Reed-Muller codes achieve capacity over the BEC and
\cite{reeves2021bitcapacity,abbe2023rmcapacityBSC}
showing that they achieve capacity over the BSC. In fact, \cite{kudekar2016erasure} show that for any $s>1$, if a linear code $C \subseteq \F_2^N$ of size
$|C|\leq 2^{(1-\epsilon)N}$ has a doubly-transitive symmetry group $G$ such that for every $S \subseteq \{1,2,\dotsc, N\}$ with $|S| = (s \log N)^{0.99}$, $|\{\pi(S): \pi \in G\}| \geq N^{s+1}$, then $C$ can tolerate $\epsilon - O(1/s)$ fraction of random erasures \footnote[1]{Their result is not stated in this form, but we believe this follows from their analysis.}. Given these results, it is natural to investigate the types of symmetry that lead to good codes. In this paper, we prove three kinds of results relevant to understanding the error resilience of general linear codes,  transitive linear codes, and doubly transitive linear codes.
\begin{enumerate}
	\item We give a clean and tight weight distribution bound for every transitive linear code. We show that for any such code $C \subseteq \F_2^N$,
$$\Pr_{c \in C}[\textnormal{wt}(c) = \alpha N] \leq 2^{-(1-h(\alpha)) \mathsf{dim}(C)},$$
where $\textnormal{wt}(c)$ denotes the Hamming weight of the binary vector $c$. This bound is proven by combining transitivity with the subadditivity of entropy. In some regimes, it improves on all previously known weight bounds for Reed-Muller codes (See Table \ref{tablerm} and Appendix \ref{aweight} for a comparison of our weight bound with previous results).

\item We give a new criterion to validate that a code can be decoded over the BSC. For any fixed integers $0\leq s\leq N$, define the Krawtchouk polynomial of degree $s$ to be the real polynomial
$$K_{s}(x):= \sum_{j=0}^s (-1)^j\binom{x}{j}\binom{N-x}{s-j},$$
where for any polynomial $p(x)$ we abused notation to write

$\binom{p(x)}{j}:=\frac{p(x)(p(x)-1)\dotsc (p(x)-j+1)}{j!}$. Let $C^\perp$ denote the dual code of $C$. In spirit, our criterion says that any linear code $C$ satisfying
$$ \E_{c \in C^\perp}[K_{\epsilon N}(\textnormal{wt}(c))^2] < (1+o(N^{-1}))  \binom{N}{\epsilon N}$$
can be uniquely decoded on the BSC with high probability. Our actual result is a little more technically involved (see Theorem \ref{fouriercriterionunique}). This criterion implies that any linear code whose dual codewords are distributed sufficiently close to the binomial distribution must be resilient to $\epsilon$-errors (see Corollary \ref{weightunique}).  Moreover, if the above expectation is bounded by $o(k\binom{N}{\epsilon N})$, then we prove that the code can be list-decoded with a list size of about $k$.

\item Finally, we combine
known estimates for the Krawtchouk polynomials with our weight bound for transitive codes, and with known weight bounds for doubly transitive codes, to obtain list-decoding results for both families of codes. In some regimes, our bounds for doubly transitive codes achieve the information-theoretic optimal trade-off between rate and list size.

\end{enumerate}

Next, we discuss our results more rigorously. Throughout this section, for any set $X$ we denote the uniform distribution over $X$ by $\mathcal{D}(X)$.

\hfill\\
\textbf{I. weight bounds for transitive codes}
\hfill\\
We bound the weight distribution of any transitive linear code over any finite field. See section \ref{weighttransitive} for the proof.
\begin{theorem}\label{probtransitive}
Consider any finite field $\F_q$, and let $C\subseteq \mathbb{F}_q^N$ be any transitive linear code. Then for any $\alpha\in (0,1)$, we have
$$ \Pr_{c\sim \mathcal{D}(C)}\Big[\textnormal{wt}(c) = \alpha N\Big] \leq q^{-(1-h_q(\alpha)) \textnormal{dim }C},$$
where $\mathcal{D}(C)$ is the uniform distribution over all codewords in $C$, $\textnormal{wt}(c)$ is the number of non-zero coordinates of $c$, and $h_q$ is the q-ary entropy $$h_q(\alpha):= (1-\alpha) \log_q \frac{1}{1-\alpha} + \alpha \log_q\frac{q-1}{\alpha}.$$
\end{theorem}
Note that $h_2(\alpha)$ denotes the binary entropy function. We note that in some regimes (see Table \ref{tablerm}), the bound above improves on all previously proven weight distribution bounds for Reed-Muller codes, even though the only feature of the code that we use is transitivity. See Appendix \ref{aweight} for a comparison of our Theorem \ref{probtransitive} with previous weight bounds.

\hfill\\
\textbf{II. a criterion for decoding on the BSC}
\hfill\\
We develop a new approach for proving decoding results over the BSC, i.e. the communication channel whose errors $z\in\F_2^N$ are sampled from the $\epsilon$-noisy distribution
$$P_\epsilon(z):=\epsilon^{\textnormal{wt}(z)}(1-\epsilon)^{N-\textnormal{wt}(z)}$$
for some $\epsilon\in(0,\frac{1}{2})$.
Our approach is based on Fourier analysis, although unlike \cite{kudekar2016erasure} and  \cite{hazla2021polyclose}, the ideas we use do not rely on bounds on influences.
We obtain the following result. 
\begin{theorem}\label{fouriercriterionunique}
Let $C\subseteq\F_2^N$ be any linear code, and denote by $C^\perp\subseteq\F_2^N$ its dual code. Then for any $\epsilon \in (0,$ $\frac{1}{2})$ satisfying $N>\frac{1}{\epsilon^4(\frac{1}{2}-\epsilon)^4} $, there exists a decoding function $d:\F_2^N\rightarrow C$
such that for all $c\in C$ we have
\begin{align*}
    \Pr_{\rho\sim P_\epsilon}[ d(c+\rho)\neq c]&\leq 2e^{-\frac{\sqrt{N}}{3\epsilon}}+N \mathop{\max}_{\substack{S\subseteq [\epsilon N\pm N^{3/4}]\cap\N \\
    1\leq|S|\leq 2}} \Big\{  \frac{1}{\binom{N}{S}}\E_{c\sim\mathcal{D}(C^\perp)}\big[ K_S(\textnormal{wt}(c))^2\big]-1 \Big\},
\end{align*}
where $\binom{N}{S}:=\sum_{j\in S}\binom{N}{j}$, and $K_S(x):=\sum_{j\in S}K_j(x)$ for $K_j$ the Krawtchouk polynomial of degree $j$, and where $[\epsilon N\pm N^{3/4}]$ denotes the interval $[\epsilon N-N^{3/4},\epsilon N+N^{3/4}]$.
\end{theorem}
See section \ref{weightdecoding} (Theorem \ref{fouriercriteriongeneral}) for the proof. We will now consider one interesting consequence of Theorem \ref{fouriercriterionunique}. Let $\epsilon\in(0,\frac{1}{2})$ be arbitrary, and define $$A_\epsilon:=\{\alpha N\in\N:h(\alpha)>1-h(\epsilon)-N^{-1/5}\}.$$
Our next corollary states that whenever the dual codewords of $C$ are distributed sufficiently close to the binomial distribution for all weights in $A_\epsilon$, the code $C$ must be resilient to $\epsilon$-errors. See Appendix \ref{abinomialweight} for the proof.
\begin{corollary}\label{weightunique}
Let $\epsilon\in(0,\frac{1}{2})$ be arbitrary, and let $C\subseteq\F_2^N$ be a linear code.
Suppose that for every $j\in A_\epsilon$ we have
\begin{align*}
    \Pr_{y\sim\mathcal{D}(C^\perp)}\big[\textnormal{wt}(y)=j\big]\leq \big( 1+o(N^{-1})  \big)\frac{\binom{N}{j}}{2^N},
\end{align*}
and suppose that
\begin{align*}
    \Pr_{y\sim\mathcal{D}(C^\perp)}\big[\textnormal{wt}(y)\notin A_\epsilon\big]\leq 2^{N^{\frac{3}{4}}}\cdot \frac{\sum_{i\notin A_\epsilon}\binom{N}{i}}{2^N}.
\end{align*}
Then $C$ is resilient to $\epsilon$-errors.
\end{corollary}
As a proof of concept, we note that a uniformly random linear code of dimension $(1-h(\epsilon))N-\sqrt{N}$ satisfies all these conditions simultaneously with high probability.

\hfill\\
\textbf{III. list decoding results}
\hfill\\
Using a generalized version of Theorem \ref{fouriercriterionunique} (namely, Theorem \ref{fouriercriteriongeneral} in section \ref{weightdecoding}), we obtain list decoding bounds for both transitive and doubly transitive codes. We start with our bound for doubly transitive codes (see Section \ref{listrm} for the proof).
\begin{theorem}\label{rmclose}
Fix any $\epsilon\in(0,\frac{1}{2})$ and any $\gamma\leq 1-\log(1+2^{-4\epsilon})$. Then any doubly  transitive linear code $C\subseteq\F_2^N$ of dimension $\textnormal{dim }C=(1-\gamma) N$ can with high probability list-decode $\epsilon$-errors using a list $T$ of size
$$|T|=2^{h(\epsilon)N-\gamma N+o(N)} .$$
\end{theorem}
Although our lists have exponential size, the list size is non-trivial in the sense that it is much smaller than the number of noise vectors (which is about $\binom{N}{\epsilon N}\approx2^{h(\epsilon)N}$) and the number of codewords in the code (which is $2^{\textnormal{dim }C}=2^{(1-\gamma)N}$). In fact, a standard calculation (see Appendix \ref{alistcapacity}) shows that any code $C\subseteq\F_2^N$ of dimension $(1-\gamma)N$ that can successfully list-decode errors of probability $\epsilon$ with list size $|T|$ must satisfy
\begin{align}\label{lwboundlist}
|T|\gtrsim 2^{(h(\epsilon)-\gamma)N}.
\end{align}
Our bound in Theorem \ref{rmclose} shows that doubly transitive codes achieve these optimal parameters, at least in some regimes. (Since the requirement $\gamma \leq 1-\log(1+2^{-4\epsilon})$ can be a bit hard to digest, we note for e.g. that $1.3\epsilon<1-\log(1+2^{-4\epsilon})$ for all $\epsilon\in(0,\frac{1}{2})$, so Theorem \ref{rmclose} implies that any doubly transitive code of rate $\geq1-1.3\epsilon$ achieves the optimal list size for decoding $\epsilon$-errors).
We now turn to our list-decoding bound for transitive codes (see section \ref{listtransitive} for the proof).
\begin{theorem}\label{thmtransitivelist}
Fix any $\epsilon\in(0,\frac{1}{2})$ and $\eta\in(0,1)$.
Then any  transitive linear code $C\subseteq\F_2^N$ of dimension $\textnormal{dim }C=\eta N$ can with high probability list-decode $\epsilon$-errors using a list $T$ of size
\begin{align*}
|T|=2^{\epsilon N\log(\frac{2}{1-\eta})+o(N)}+2^{4\epsilon N+o(N)}.
\end{align*}
\end{theorem}
As an explicit example of the types of bounds one gets from Theorem \ref{thmtransitivelist}, we have that any transitive linear code of dimension $\textnormal{dim }C=(1-\frac{4\epsilon}{e})N$
can with high probability list-decode $\epsilon$-errors using a list $T$ of size
\begin{align}\label{egtransitive}
|T|= 2^{(h(\epsilon) -\epsilon +\frac{\epsilon^2}{\ln2} )N+o(N)}+2^{4\epsilon N+o(N)}.
\end{align}
See Appendix \ref{calculationsthm} for the calculations. For comparison, recall that the lower bound (\ref{lwboundlist}) states that any code $C$ of dimension $(1-\frac{4\epsilon}{e})N$ requires a list size of at least about $ 2^{(h(\epsilon)-\frac{4\epsilon}{e})N}$.
\subsection{Techniques}
Our weight distribution bound for transitive linear codes (Theorem \ref{probtransitive}) is proven by showing that the entropy of a uniformly random codeword of weight $\alpha N$ is small. To do this, we analyze the entropy of the coordinates corresponding to linearly independent columns of the generator matrix. Transitivity implies that every coordinate in the code has the same entropy, and subadditivity of entropy can then be used to bound the entropy of the entire distribution.

To obtain our decoding criterion, we make use of a connection between the probability of a decoding error and the $\ell_2$ norm of the coset distribution of the code. To explain the intuition, let us start by assuming that exactly $\epsilon N$ of the coordinates in the codeword are flipped, although our results actually hold over the BSC as well. Let $z$ be the vector in  $\F_2^N$ that represents the errors introduced by the channel, and let $H$ be the parity check matrix of the code. Then by standard arguments, if $z$ can be recovered from $Hz^\intercal$, the codeword can be decoded. In the case where $z$ is uniformly distributed on vectors of weight $\epsilon N$, this amounts to showing that
with high probability, the coset of $z$ does not contain any other string of weight $\epsilon N$ (in other words, there is no $w\in\F_2^N$, $w\neq z$ of weight $\textnormal{wt}(w)=\epsilon N$ such that $Hz^\intercal=Hw^\intercal$). This can be understood by computing the norm $$\|f\|_2^2 := \sum_{y} f(y)^2 = \sum_{y} \Pr[Hz^\intercal = y^\intercal]^2,$$where $f(y) = \Pr[Hz^\intercal = y^\intercal]$. This norm computes the probability that two independent, uniformly random strings $z,z'$ of weight $\epsilon N$ collide under the mapping $z\mapsto Hz^\intercal.$ Thus $\|f\|_2^2$ is always at least $\binom{N}{\epsilon N}^{-1}$, because with probability $\binom{N}{\epsilon N}^{-1}$ we have $z=z'$. If $\|f\|_2^2$ is close to $\binom{N}{\epsilon N}^{-1}$, then the code can be decoded with high probability. If $\|f\|_2^2$ is larger than $\binom{N}{\epsilon N}^{-1}$, then we show that the code can be list-decoded with high probability, where the size of the list is proportional to $\binom{N}{\epsilon N} \|f\|_2^2$.

Thus, to understand decoding, we need to understand $\|f\|_2^2$. Using Fourier analysis, we express this quantity as
\begin{align}\label{introlink}
 \|f\|_2^2 = \frac{1}{\binom{N}{\epsilon N}^2}\sum_{j=0}^N \Pr[\textnormal{wt}(c^\perp) = j] \cdot K_{\epsilon N}(j)^2,
 \end{align}
where $c^\perp$ is a uniformly random codeword in the dual code and {\small $K_{\epsilon N}$ is the Krawtchouk} polynomial of degree $\epsilon N$.
We note that such relations for the coset weight distribution have been used to understand the discrepancy of subsets of the sphere, as well as subsets of other homogeneous spaces. In particular, (\ref{introlink}) was proven in a slightly different form in \cite{2021bargfourierlink} (see Theorem 2.1 and Lemma 4.1), whereas over $\R^N$ results of this type had previously been derived in \cite{bilyk2018stolarskyprinciple,skriganov2019stolarskyhomogeneous}.

Using estimates for the magnitude of Krawtchouk polynomials and bounds for the weight distribution of the dual code $C^\perp$, one can thus bound the norm $\|f\|_2^2$ in the set-up where the error string $z$ is a random vector of weight exactly $\epsilon N$. Using essentially the same techniques, one can also bound the norm $\|f\|_2^2$ when the error string $z$ is a random vector of weight $\approx\epsilon N$, i.e. $z$ is taken uniformly at random from the set $S=\{x\in\F_2^N:\textnormal{wt}(x)=\epsilon N\pm N^{3/4}\}$.

Our next step is then to show that the $\ell_2$ norm corresponding to the $\epsilon$-noisy distribution is very similar to the $\ell_2$ norm corresponding to the uniform distribution over $S$. Intuitively, this is because $S$ only contains a very small range of weights, so the $\epsilon$-noisy distribution and the uniform distribution must behave very similarly over strings of weight in $S$.
It then follows that their corresponding $\ell_2$ norms must be similar as well.

Our decoding criteria (Theorem \ref{fouriercriterionunique}, Corollary \ref{weightunique}) are thus obtained by bounding the norm $\|f\|_2^2$ using estimates for Krawtchouk polynomials and for the weight distribution of the dual code $C^\perp$. Our list-decoding results (Theorems \ref{rmclose} and \ref{thmtransitivelist}) then follow from our weight bound for transitive codes (Theorem \ref{probtransitive}) and from a weight bound of Samorodnitsky for doubly transitive codes (Theorem \ref{previousboundsmall}).

\subsection{Related work}
It has been shown that LDPC codes achieve capacity over  Binary Memoryless Symmetric Channels (BMS) \cite{luby1997ldpc2,kudekar2013ldpc,gallager1962ldpc}, which
includes both the BSC and the BEC. These constructions are not deterministic, and it is only with the advent of polar codes \cite{arikan2009polar} that we obtained  capacity-achieving codes with both a deterministic constructions and efficient encoding and decoding algorithms.

Polar codes are closely related to Reed-Muller codes, in the sense that they also consist of subspaces that correspond to polynomials over $\F_2$ \cite{arikan2009polar}. For this reason, when Arikan showed that polar codes achieve capacity over the BSC, Reed-Muller codes received renewed attention from the coding theory community. A long and fruitful line of work \cite{abbe2015RMlowrate,kudekar2016erasure,abbe2019rmpolarize,hazla2021polyclose,abbe2020almostRM,reeves2021bitcapacity,samorodnitsky2022undetected} has recently culminated in Abbe and Sandon showing that Reed-Muller codes achieve capacity over all BMS channels \cite{abbe2023rmcapacityBSC}.

One of the key properties of Reed-Muller codes, which is strongly leveraged in all the papers above, is that they are doubly transitive. In fact, Kudekar, Kumar, Mondelli, Pfister, Sasoglu and Urbanke showed that any doubly transitive linear code achieves bit-decoding capacity over the BEC \cite{kudekar2016erasure}, i.e. that one can with high probability recover any single bit of the original codeword (but not with high enough probability that one could take a union bound). An important open question is thus whether general doubly transitive codes achieve capacity over all BMS channels under block-MAP decoding, or whether one really needs the additional symmetry that Reed-Muller codes possess. Some of the key techniques used in \cite{reeves2021bitcapacity} and \cite{abbe2023rmcapacityBSC} are very much tailored to Reed-Muller codes, or at least to codes consisting of evaluations of polynomials over $\F_2^N$; in order to prove the same results for arbitrary doubly transitive codes, it may be necessary to develop a more general framework.

\hfill\\
\textbf{Weight bounds for doubly transitive codes}
\hfill\\
As far as we know, there were no previously known weight bound for general transitive linear codes. There are however two known weight bounds for doubly transitive codes (which we'll give here), as well as many known weight bounds for Reed-Muller codes (which we'll discuss in the next section). We compare all these results in Appendix \ref{aweight}. We state below the weight bounds of Samorodnitsky, which to the best of our knowledge are the only previously known weight bounds for doubly transitive codes.

\begin{theorem}[Proposition 1.4 in \cite{samorodnitsky2020weightimproved}]\label{previousboundsmall}
Let $C\subseteq\F_2^N$ be a doubly transitive linear code of rate $\eta:=\frac{\textnormal{dim }C}{N}$. For any $j\in\{1,2,\dotsc ,N\}$, define $j^{*}:=\min\{j,N-j\}.$  Then for any $j\in\{1,2,\dotsc ,N\}$,
$$\left|\Big\{c\in C:\textnormal{wt}(c)= j\Big\}\right|\leq 2^{o(N)}\cdot
\left(\frac{1}{2^{1-\eta}-1}\right)^{j^*}.
$$
Moreover, if $j^*\geq (1-2^{\eta-1})N$,
$$\left|\Big\{c\in C:\textnormal{wt}(c)= j\Big\}\right|\leq 2^{o(N)}\cdot
\frac{\binom{N}{j^*}|C|}{2^N}.
$$
\end{theorem}

\hfill\\
\textbf{Weight bounds for reed-muller codes}
\hfill\\
As we mentioned earlier, one specific family of doubly transitive codes that has received a lot of attention is the family of Reed-Muller codes. Several past works have proven bounds on their weight distribution. We give here a brief history of these results, although for space reasons (there are over 10 different weight bounds), we will not state them here. We delve deeper into some prior results in Appendix \ref{aweight}, where we compare them to our weight bound of Theorem \ref{probtransitive}. We also refer the reader to \cite{abbe2021survey,abbe2023survey2} for a discussion on the subject, as well as a thorough exposition to Reed-Muller codes.

The earliest work we are aware of is that of Sloane and Berlekamp, who characterized all codewords in Reed-Muller codes of degree $2$ \cite{sloane1970degree2}. For arbitrary degree, Kasami and Tokura then characterized all codewords of weight smaller than twice the minimum distance \cite{kasami1970distance2}, before Kasami, Tokura and Azumi improved this characterization to include all codewords of weight up to 2.5 times the minimum distance \cite{kasami1976distance2half}.

\begin{table}
\begin{adjustbox}{max width=1\textwidth,center}
	\begin{threeparttable}
	\small
	\renewcommand{\arraystretch}{3} 
 \caption{Best known upper bounds on the number of codewords of weight $w$ in $\mathsf{RM}(n,d)$}\label{tablerm}
\begin{tabular}{cccc}
\toprule
& $\bm{o(n)\leq d\lesssim 0.38n}$ & $\bm{0.38n\lesssim d\leq \frac{n}{2}-\Tilde{\Omega}(\sqrt{n})}$ &    $\bm{ d= \frac{n}{2}\pm \Tilde{O}(\sqrt{n})}$ \\ \midrule
$\bm{o(N)\leq w< \tau N}$* & $2^{O\big(\binom{n}{\leq d}(\frac{d}{n})^{\lceil\log\frac{N}{\textnormal{w}}\rceil}\log\frac{N}{\textnormal{w}}\big)}$ \cite{sberlo2020weightbound}& $\leftarrow$ as previous  & $
\left(\frac{1}{2^{1-\binom{n}{\leq d}/N}-1}\right)^{w+o(N)}$\cite{samorodnitsky2020weightimproved}** \\
$ \bm{\tau N\leq w< \frac{N}{4}}$ &  $ 2^{h(\frac{w}{N})\binom{n}{\leq d}}\textnormal{ (Theorem \ref{probtransitive})}$& $\leftarrow$ as previous & as above $\uparrow$  \\
$ \bm{\frac{N}{4}\leq w\leq \frac{N}{2}-o(N)}$ & $2^{(1-2^{-O(\log\frac{N}{w})})\binom{n}{\leq d}}$\cite{sberlo2020weightbound}&  $ 2^{h(\frac{w}{N})\binom{n}{\leq d}}\textnormal{ (Theorem \ref{probtransitive})}$ & as above $\uparrow$\\
 \hdashline
$ \bm{(1-2^{\binom{n}{\leq d}/N-1})N\leq w\leq \frac{N}{2}}$ &  &
 & $\binom{N}{w}\cdot2^{\binom{n}{\leq d}-N+o(N)}$ \cite{samorodnitsky2020weightimproved}\\
\bottomrule
\end{tabular}
		\begin{tablenotes}
                \item *$\tau$ is a threshold that depends on $\frac{d}{n}$. See (\ref{taueqn}) and the surrounding discussion.
			\item **Unless $w\geq(1-2^{\frac{\binom{n}{\leq d}}{N}-1})N$, in which case see row below dashed line.
		\end{tablenotes}
\end{threeparttable}
\end{adjustbox}
\end{table}

A few decades later, Kaufman, Lovett and Porat gave asymptotically tight bounds on the weight distribution of Reed-Muller codes of constant degree \cite{kaufman2012constantdegree}. Abbe, Shpilka and Wigderson then built on these techniques to obtain bounds for all degrees smaller than $\frac{n}{4}$ \cite{abbe2015RMlowrate}, before Sberlo and Shpilka again improved the approach to obtain bounds for all degrees \cite{sberlo2020weightbound}. Most recently, Samorodnitsky used completely different ideas to obtain weight bounds for codes of constant rate \cite{samorodnitsky2020weightboundhalf,samorodnitsky2020weightimproved} (see previous section).

The bounds mentioned above are strong when $j/N \ll 1/2$. For $j/N$ close to $1/2$, the first results we are aware of are due to Ben-Eliezer, Hod and Lovett \cite{ben-eliezer2012weighthalf1}. Their bounds were extended to Reed-Muller codes over prime fields by Beame, Oveis Gharan and Yang \cite{beame2020weightodd}. Sberlo and Shpilka then obtained the first results to hold for all degrees in \cite{sberlo2020weightbound}, while Samorodnitsky again obtained bounds for codes of constant rate in \cite{samorodnitsky2020weightimproved}.

\medskip
We summarize in Table \ref{tablerm} the best known upper bounds on the weight distribution of Reed-Muller codes.
We note that in some regimes, our Theorem \ref{probtransitive} improves on all the aforementioned weight bounds. See Appendix \ref{aweight} for some details; see also \cite{abbe2023survey2}, section 4.

\medskip
In Table \ref{tablerm}, $\tau$ is a threshold that depends on $\frac{d}{n}$. We show for e.g. that
\begin{align}\label{taueqn}
    \tau\leq\frac{1}{2}\cdot 2^{-\frac{\log17}{\log\frac{n}{2d}}}
\end{align} 
 in Appendix \ref{aweight}, which is below the trivial $\frac{1}{4}$ for any $\frac{d}{n}>\frac{1}{34}$. We note that when $\frac{d}{n}$ is small enough (smaller than some constant), then $\tau=\frac{1}{4}$.

\hfill\\
\textbf{List decoding}
\hfill\\
List decoding was proposed by Elias in 1957 as an alternative to unique decoding \cite{elias1957firstlist}. In the list decoding framework, the receiver of a corrupted codeword is asked to output a list of potential codewords, with the guarantee that with high probability one of these codewords is the original one. This of course allows for a greater fraction of errors to be tolerated.

\medskip
The list decoding community has largely focused on proving results for the adversarial noise model, and many codes are now known to achieve list-decoding capacity. For example uniformly random codes achieve capacity, as do uniformly random linear codes \cite{guruswami2002listlinear1,Li2018listlinear2,Guruswami2011listlinear3}. Folded Reed-Solomon codes were the first explicit codes to provably achieve list-decoding capacity \cite{Guruswami2008explicitlist}, followed by several others a few years later \cite{Guruswami2012explicitlist2,Kopparty2015explicitlist3,Hemenway2017explicitlist4,Mosheiff2020explicitlist5,brakensiek2022listdecoding}.
For the rest of this paper however, we will exclusively work in the model where the errors are stochastic. 
In this model, as far as we know, there was no known list-decoding bound for transitive codes prior to our Theorem \ref{thmtransitivelist}. For doubly transitive codes,
the strongest previously known list decoding bound was, to the best of our knowledge, that any doubly transitive code $C\subseteq\F_2^N$ of dimension $\textnormal{dim }C=(1-\gamma) N$ can list-decode $\epsilon$-errors with a list $T$ of size
\begin{align}\label{previouslistresult}
    |T|=2^{\epsilon N\log\frac{4\epsilon(1-\epsilon)}{(2^\gamma-1)^2}+o(N)}.
\end{align}
This result, although not
explicitly stated in \cite{samorodnitsky2020weightimproved}, can be obtained from his weight bound of Theorem \ref{previousboundsmall} by bounding the
expected number of codewords that end up closer to the received string than the original codeword, and then applying Markov’s inequality.
We summarize in Table \ref{tablelist} our list-decoding results and compare them to previous work.
\begin{table}
\begin{adjustbox}{max width=1\textwidth,center}
	\begin{threeparttable}[!htbp]
	\small
	\renewcommand{\arraystretch}{3} 
\caption{\small Upper bounds on the list size needed for a code of rate $1-\gamma $ to recover from $\epsilon$-errors}\label{tablelist}
\begin{tabular}{cccc}
\toprule
 & \textbf{Previous work} &    \textbf{Our results} & \makecell{\textbf{Information-theoretic}\\ \textbf{lower bound}}\\ \midrule
\makecell{\textbf{Transitive codes}\\(any $\gamma$)}&- & $2^{\epsilon N\log(\frac{2}{\gamma})+o(N)}+2^{4\epsilon N+o(N)}$ & $2^{h(\epsilon)N-\gamma N-o(N)}$\\
\makecell{\textbf{Doubly transitive codes}\\(any $\gamma\leq 1- \log(1+2^{-4\epsilon})$)} & $2^{\epsilon N\log\frac{4\epsilon(1-\epsilon)}{(2^\gamma-1)^2}+o(N)}$ & $2^{h(\epsilon)N-\gamma N+o(N)}$  & $2^{h(\epsilon)N-\gamma N-o(N)}$   \\
\bottomrule
\end{tabular}
		
\end{threeparttable}
\end{adjustbox}
\end{table}
We note that the previously known bound for doubly transitive codes stays strictly above the optimal size of $2^{h(\epsilon)N-\gamma N\pm o(N)}$ (see Appendix \ref{acomparelist}).

\medskip
Following the publication of the present paper on arxiv, Hazla showed in \cite{hazla2022exponentiallist} that any code $C\subseteq\F_2^N$ of dimension $\textnormal{dim }C=(1-\gamma)N\geq (1-4\epsilon(1-\epsilon))N$ that achieves capacity over the BEC can list-decode $\epsilon$-errors with a list $T$ of size
$$|T|=2^{\gamma N-h(\epsilon)N+o(N)}.$$

\hfill\\
\textbf{Krawtchouk polynomials}
\hfill\\
Fix any non-negative integers $N$ and $s\leq N$. The Krawtchouk polynomial of degree $s$ is the real polynomial
$$K_s(x):=\sum_{j=0}^s(-1)^j\binom{x}{j}\binom{N-x}{s-j},$$
where for any polynomial $p(x)$ we defined $\binom{p(x)}{j}:=\frac{p(x)(p(x)-1)\dotsc (p(x)-j+1)}{j!}$. For any subset $S\subseteq \{0,1,\dotsc,N\}$, we will be interested in the polynomial $K_S(x):=\sum_{s\in S}K_s(x)$. For $v\in\F_2^N$, we will sometimes abuse notation and use $K_S(v)$ to mean $K_S(\textnormal{wt}(v))$, where $\textnormal{wt}(v)$ denotes the Hamming weight of $v$. The following proposition follows from standard results (see for instance \cite{1999surveykrawtchouk}, or Theorem 16 in \cite{1977bookkrawtchouk}).
\begin{prop}\label{IFourier}
For any $N$ and any $S\subseteq\{0,1,\dotsc,N\}$, we have
   $$\frac{2^{-N}}{\sum_{s\in S}\binom{N}{s}}\sum_{j = 0}^N  \binom{N}{j}  K_S(j)^2=1.$$
\end{prop}
Good estimates for Krawtchouk polynomials of any degree were obtained in \cite{kallai1995krawtchouk1,ismail1998krawtchouk2,polyanskiy2019krawtchouk3} (see for e.g. \cite{polyanskiy2019krawtchouk3}, Lemma 2.1). These estimates are asymptotically tight in the exponent. Note that $|K_s(x)|=| K_s(N-x)|= |K_{N-s}(x)|$ by symmetry (see for e.g. equations (2.8) and (2.9) in \cite{polyanskiy2019krawtchouk3}), so it suffices to understand the case $x,s\leq\frac{N}{2}$.
\begin{theorem}[\cite{kallai1995krawtchouk1,ismail1998krawtchouk2,polyanskiy2019krawtchouk3}]\label{thmkrawtchouktight}
Let $\epsilon,\delta\in(0,\frac{1}{2})$ be arbitrary. If $\delta\geq\frac{1}{2}- \sqrt{\epsilon(1-\epsilon)}$, then
$$|K_{\epsilon N}(\delta N)|\leq 2^{(1+h(\epsilon)-h(\delta))\frac{N}{2}}.$$
If $\delta<\frac{1}{2}- \sqrt{\epsilon(1-\epsilon)}$, define $\omega:=\frac{1-2\delta-\textnormal{sgn}(1-2\delta)\sqrt{(1-2\delta)^2-4\epsilon(1-\epsilon)}}{2(1-2\delta)}.$ Then
$$|K_{\epsilon N}(\delta N)|\leq \frac{(1-\omega)^{\delta N}(1+\omega)^{(1-\delta)N}}{\omega^{\epsilon N}}.$$
\end{theorem}
As the second expression can be somewhat cumbersome to use, \cite{polyanskiy2019krawtchouk3} also gives the following weaker bound.
\begin{theorem}[Lemma 2.2 and equation 2.10 in \cite{polyanskiy2019krawtchouk3}]\label{thmkrawtchoukbound}
For any $\epsilon \in(0,\frac{1}{2})$ and any $\delta<\frac{1}{2}-\sqrt{\epsilon(1-\epsilon)}$, we have
$$|K_{\epsilon N}(\delta N)|\leq 2^{h(\epsilon)N+\epsilon N\log(1-2\delta)}.$$
\end{theorem}
We will need the above estimate when using our Theorem \ref{fouriercriterionunique} to obtain list-decoding results for transitive and doubly transitive codes.

\hfill\\
\textbf{Relations between a code and its dual}
\hfill\\
Several connections have been established between the properties of a code $C\subseteq\F_2^N$ and those of its dual $C^\perp$. MacWilliams proved in \cite{Macwilliams1963identity} the MacWilliams identities, relating the weight distributions of $C$ and $C^\perp$ by 
\begin{align*}
    \sum_{c\in C}(1+z)^{N-\textnormal{wt}(c)}(1-z)^{\textnormal{wt}(c)}=|C|\sum_{c\in C^\perp} z^{\textnormal{wt}(c)},
\end{align*}
where $z$ is an indeterminate.
Krasikov and Litsyn then bounded the weight distribution of any linear code with large dual distance \cite{krasikov1995weight,krasikov1998dualdistancetoweight}, while Ashikhmin, Honkala, Laihonen and Litsyn derived bounds for the covering radius of any such code \cite{ashikhmin1999coveringradius}. To the best of our knowledge however, the present paper is the first work to relate the decoding performance of a code $C\subseteq\F_2^N$ to the weight distribution of its dual
. As far as we know, there is no known way to apply the results mentioned above to obtain a unique-decoding criterion like our Theorem \ref{fouriercriterionunique} or our Corollary \ref{weightunique}.

\section{Outline of the paper}\label{outline}
The main question we will be looking into is whether or not a family of list-decoding codes $\{C_N\}$, with $C_N\subseteq\mathbb{F}_2^N$, is asymptotically resilient to independent errors of probability $\epsilon$. Formally, we are given a list size $k=k(N)$ and want to know if there exists a family of decoding functions $\{d_N\}$, with $d_N:\mathbb{F}_2^N\rightarrow \left(\mathbb{F}_2^N\right)^{\otimes k}$, such that for every sequence of codewords $\{c_N\}$ we have
\begin{align*}
    \lim_{N\rightarrow \infty}\Pr_{\rho_N\sim P_\epsilon} \big[c_N \notin d_N(c_N+\rho_N)\big] = 0.
\end{align*}
We note that the unique decoding problem can be seen as setting $k=1$ in the above set-up.
Our general approach will be based on trying to identify the error string $\rho\in\F_2^N$ from its image $H\rho^\intercal$. In particular, we will be interested in the max-likelihood decoder
\begin{align}\label{maxdecoderlist}
D_k(x)&:=\mathop{\textnormal{argmax}}_{\substack{\{z_1,z_2\dotsc,z_k\}\subseteq \mathbb{F}_2^N\\{Hz_i}^\intercal=x^\intercal \textnormal{ for all }i}}\{P_\epsilon(z_1)+P_\epsilon(z_2)+\dotsc+P_\epsilon(z_k)\} \nonumber \\
&=\mathop{\textnormal{argmin}}_{\substack{\{z_1,z_2\dotsc,z_k\}\subseteq \mathbb{F}_2^N\\Hz_i^\intercal=x^\intercal \textnormal{ for all }i}}\{\textnormal{wt}(z_1)+\textnormal{wt}(z_2)+\dotsc+\textnormal{wt}(z_k)\},
\end{align}
where ties are broken according to the lexicographic order. The following standard lemma (see for e.g. page 17, Theorem 5 in \cite{1977bookkrawtchouk}) states that if the max-likelihood decoder is able to identify the error string $\rho$, then it is possible to recover the original codeword.
\begin{lemma}\label{eqvltdecoder}
Let $H$ be the $t\times N$ parity-check matrix of the linear code $C$, and let $D:\mathbb{F}_2^t\rightarrow \left(\mathbb{F}_2^N\right)^{\otimes{k}}$
be arbitrary. Then there exists a decoder $$d:\mathbb{F}_2^N\rightarrow C^{\otimes{k}}$$ such that for every $c\in C$ we have
\begin{align*}
\Pr_{\rho\sim P_\epsilon}[c\notin d(c+\rho)]\leq \Pr_{\rho\sim P_\epsilon}[\rho\notin D(H\rho^\intercal)].
\end{align*}
\end{lemma}
From this point onward, our goal will thus be to prove that the max-likelihood decoder in (\ref{maxdecoderlist}) succeeds in recovering $\rho$ with high probability.
In section \ref{collisionsdecoding}, we relate the decoding error probability of the max-likelihood decoder $D_k$ to the collision probability $$\sum_{x\in\mathbb{F}_2^t}\Pr[Hz^\intercal=x^\intercal]^2.$$
In section \ref{weightdecoding}, we build on this result to obtain a bound on the performance of $D_k$ in terms of the weight distribution of the dual code.
We then present new bounds on the weight distribution of transitive codes in section \ref{weighttransitive}. These bounds are interesting in their own right, and we show that they are essentially tight.
In section \ref{listtransitive}, we combine these bounds with our results from section \ref{weightdecoding} to obtain list-decoding results for transitive linear codes. We then repeat this argument with Samorodnitsky's Theorem \ref{previousboundsmall} in section \ref{listrm} to obtain stronger list-decoding bounds for doubly transitive codes. 

See Figure \ref{fig} for a description of the connections between our various propositions and theorems.
\begin{figure}[tbp]
\caption{Organization of our paper and connections between our results.}
\label{fig}
\centerline{\includegraphics[scale=.31]{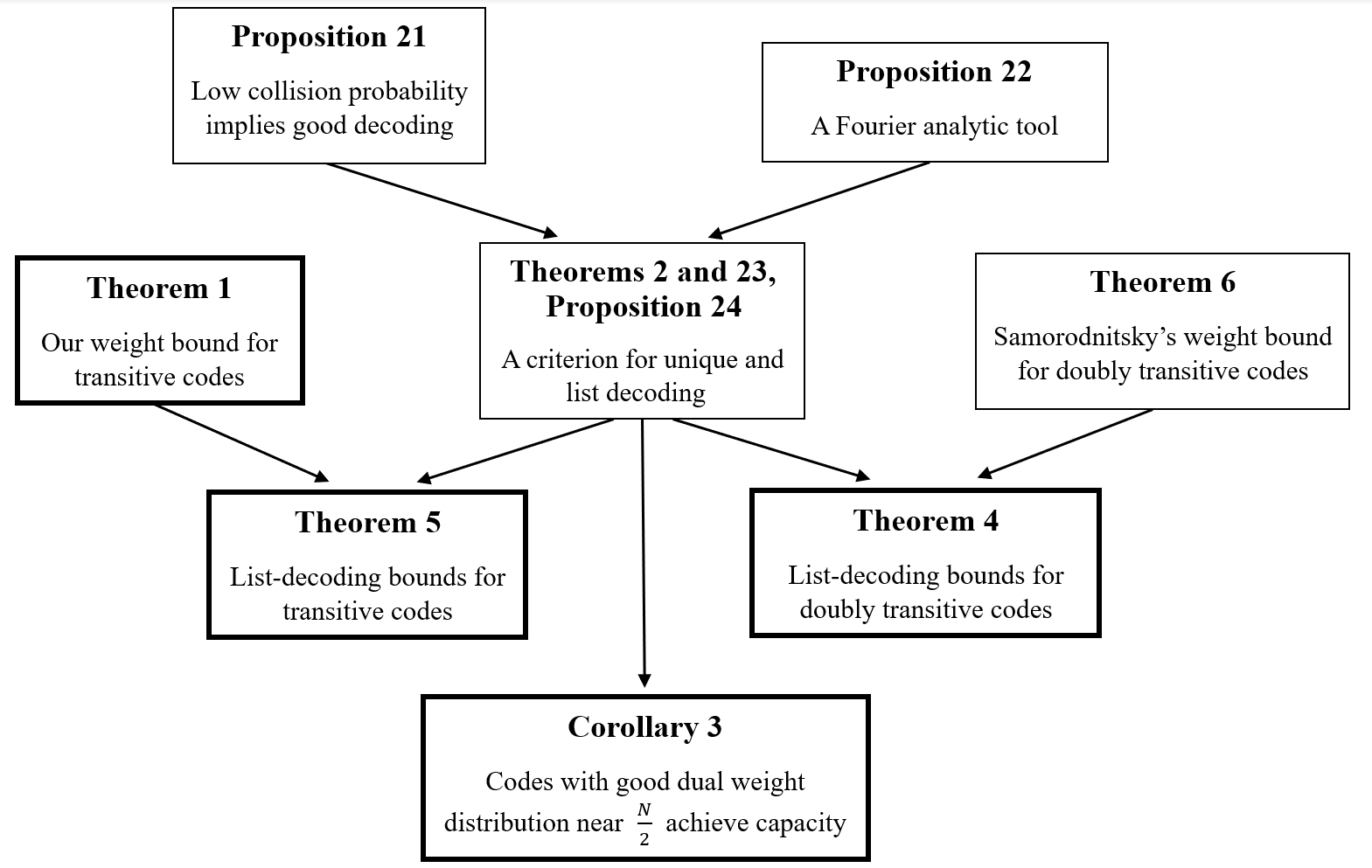}}
\end{figure}

\section{Notation, conventions and preliminaries}\label{prelim}

For the sake of conciseness, we will use the notation
$$[a\pm l]:=[a-l,a+l]$$
for intervals, the notation
$$\binom{n}{\leq d} :=\binom{n}{0} + \binom{n}{1} + \dotsb + \binom{n}{d}$$
for binomial coefficients, and for $S\subseteq\{0,1,\dotsc,N\}$ the notation
$$\binom{N}{S}:=\sum_{s\in S}\binom{N}{s}.$$
We denote the set of all non-negative integers by
$$\N:=\{0,1,2,\dotsc\}.$$
Let $N = 2^n$.
We will be working with the vector spaces $\F_2^n$ and $\F_2^N$. For convenience, we associate $\F_2^n$ with the set $[N] := \{1,2,\dotsc, N\}$, by ordering the elements of $\F_2^n$ lexicographically.
For $x \in \F_2^N$, we write $$\textnormal{wt}(x) := |\{j \in [N]: x_j=1\}|$$ to denote the Hamming weight of $x$.

\subsection{Coding theory definitions and terminology}
An $N$-bit code is a subset $C\subseteq\mathbb{F}_2^N$, and we call any element $c\in C$ a \emph{codeword} of $C$. Throughout the paper, we will use $N$ to denote the length of the code, i.e. the number of bits in any given codeword.

We will be interested in the performance of various codes over the so-called Binary Symmetric Channel (BSC for short).
When a codeword $c\in C$ is sent through the Binary Symmetric Channel, each one of its bits is flipped independently at random with probability $\epsilon$, for some $\epsilon\in(0,\frac{1}{2}).$ Throughout the paper, we will use $\epsilon$ to denote this error probability, and we will use $\rho$ to denote the vector $(\rho_1,\rho_2,\dotsc,\rho_N)$ whose $i^\textnormal{th}$ coordinate is $1$ with probability $\epsilon$ and $0$ with probability $1-\epsilon$, for all $i\in\{1,2,\dotsc N\}$. We will call the original codeword $c\in C$ the \emph{transmitted codeword}, we will call the noisy vector $\rho$ the \emph{error string}, and we will call $c+\rho$ the \emph{received message}.

We say that the code $C$ is \emph{resilient to $\epsilon$-errors} if there exists a decoding function $d:\F_2^N\rightarrow C$ such that for every $c\in C$, with high probability over the choice of an $\epsilon$-noisy error string $\rho$ we have
$$d(c+\rho)=c.$$
We will also be interested in the performance of a code with respect to list decoding. In this set-up, the decoder is now a function $d:\mathbb{F}_2^N\rightarrow C^{\otimes k}$. We say that a code $C$ can list-decode $\epsilon$-errors with a list size of $k$ if with high probability (again, over the choice of an $\epsilon$-noisy error string $\rho$), we have
$$c\in d(c+\rho).$$
Throughout the paper, we will denote by $k$ the size of the list. We note that the unique decoding problem can be seen as setting $k=1$ in the list decoding set-up.

\subsection{Linear codes}
An $N$-bit code is a subset $C\subseteq\mathbb{F}_2^N$. Whenever $C$ is a subspace of $\mathbb{F}_2^N$, we say that $C$ is a \emph{linear} code. Any linear code $C\subseteq\mathbb{F}_2^N$ can be represented by its generator matrix, which is a $\dim\textnormal{ C}\times N$ matrix $G$ whose rows form a basis of $C$. The matrix $G$ generates all codewords of $C$ in the sense that
$$C=\{vG:v\in\mathbb{F}_2^{\textnormal{dim }C}\}.$$
Another useful way to describe a linear code $C\subseteq \mathbb{F}_2^N$ is via its parity-check matrix, which is an $(N-\dim\textnormal{ C})\times N$ matrix $H$ whose rows span the orthogonal complement of $C$. The linear code $C$ can then be expressed as
$$C=\{c\in\mathbb{F}_2^N:Hc^\intercal=0\}.$$
One property that will play an important role in our analysis is transitivity, which we define below.
\begin{definition}\label{deftransitive}
A code $C\subseteq \mathbb{F}_2^N$ is transitive if for every $i,j\in[N]$ there exists a permutation $\pi:[N]\rightarrow [N]$ such that \begin{enumerate}[label=(\roman*)]
    \item $\pi(i)=j$
    \item For every element $v=(v_1,v_2,\dotsc,v_N)\in C$ we have $(v_{\pi(1)},v_{\pi(2)},\dotsc,v_{\pi(N)})\in C$.
\end{enumerate}
\end{definition}
Many well-known and widely used codes are transitive, for e.g. Reed-Muller codes, Reed-Solomon codes, general BCH codes, and all cyclic codes. In addition, Reed-Muller codes and extended primitive narrow-sense BCH codes are doubly transitive.
\begin{definition}\label{def2transitive}
A code $C\subseteq \mathbb{F}_2^N$ is doubly transitive if for every $i,j,k,\ell\in[N]$ with $i\neq k$ and $j\neq \ell$, there exists a permutation $\pi:[N]\rightarrow [N]$ such that \begin{enumerate}[label=(\roman*)]
    \item $\pi(i)=j$ and $\pi(k)=\ell$
    \item For every element $v=(v_1,v_2,\dotsc,v_N)\in C$ we have $(v_{\pi(1)},v_{\pi(2)},\dotsc,v_{\pi(N)})\in C$.
\end{enumerate}
\end{definition}
For a review on doubly transitive codes, see \cite{ivanov2020review2transitive} We note that the dual code of a transitive code is transitive, and that the dual code of a doubly transitive code is doubly transitive (see Appendix \ref{adualtransitive} for the proof).
\begin{claim}\label{dualtransitive}
The dual code $C^\perp$ of a transitive code $C\subseteq\F_2^N$ is transitive.
\end{claim}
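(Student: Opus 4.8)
The plan is to prove that transitivity is preserved under taking duals by directly transporting the symmetry from $C$ to $C^\perp$. The key observation is that the inner product on $\F_2^N$ is invariant under simultaneously permuting both of its arguments by the same permutation. I would begin by fixing $i,j\in[N]$ and invoking the transitivity of $C$ to obtain a permutation $\pi:[N]\to[N]$ with $\pi(i)=j$ that stabilizes $C$, i.e. $v\in C \iff v\circ\pi\in C$, where I write $v\circ\pi=(v_{\pi(1)},\dots,v_{\pi(N)})$. The goal is then to show that this same permutation (or its inverse) witnesses the transitivity condition for $C^\perp$.

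The main step is to verify that permuting the coordinates of any $w\in C^\perp$ by $\pi$ keeps it in $C^\perp$. Concretely, I would take an arbitrary $w\in C^\perp$ and an arbitrary $v\in C$, and compute the inner product $\ip{w\circ\pi}{v}$. Reindexing the sum by the bijection $\pi$, I can rewrite
\begin{align*}
\ip{w\circ\pi}{v} = \sum_{k=1}^N w_{\pi(k)} v_k = \sum_{k=1}^N w_{\pi(k)} v_{\pi(\pi^{-1}(k))} = \ip{w}{v\circ\pi^{-1}}.
\end{align*}
Since $C$ is stabilized by $\pi$, it is also stabilized by $\pi^{-1}$, so $v\circ\pi^{-1}\in C$ whenever $v\in C$; therefore $\ip{w}{v\circ\pi^{-1}}=0$ because $w\in C^\perp$. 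As $v\in C$ was arbitrary, this shows $w\circ\pi$ is orthogonal to all of $C$, hence $w\circ\pi\in C^\perp$. This establishes condition (ii) of Definition \ref{deftransitive} for $C^\perp$ with the permutation $\pi$, and condition (i) holds because $\pi(i)=j$ by construction.

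The one point requiring slight care is the bookkeeping in the reindexing and the claim that stabilization by $\pi$ implies stabilization by $\pi^{-1}$; the latter follows immediately since $\{v\circ\pi : v\in C\}=C$ is an equality of sets, so applying the inverse permutation also maps $C$ onto itself. I do not expect any genuine obstacle here: the entire argument rests on the symmetry $\ip{w\circ\pi}{v}=\ip{w}{v\circ\pi^{-1}}$, which is just a relabeling of a finite sum. The proof is therefore short and self-contained, which is presumably why it is deferred to an appendix.
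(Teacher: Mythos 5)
Your proof is correct and follows essentially the same route as the paper's: both hinge on reindexing the sum $\sum_k w_{\pi(k)}v_k$ to move the permutation from the dual vector onto the codeword, the only cosmetic difference being that the paper starts from a $\pi$ with $\pi(j)=i$ and uses $\pi^{-1}$ as the witness for $C^\perp$, while you use $\pi$ itself and justify separately that stabilization by $\pi$ implies stabilization by $\pi^{-1}$ (via the finiteness/set-equality observation, which is indeed needed since Definition~\ref{deftransitive} only states an implication).
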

\begin{claim}\label{dual2transitive}
The dual code $C^\perp$ of a doubly transitive code $C\subseteq\F_2^N$ is doubly transitive.
\end{claim}

\subsection{Reed-muller codes}
We will denote by $\mathsf{RM}(n,d)$ the  Reed-Muller code with $n$ variables and degree $d$. The codewords of the Reed-Muller code $\mathsf{RM}(n,d)$ are the evaluation vectors (over all points in $\F_2^n$) of all multivariate polynomials of degree $\leq d$ in $n$ variables. The dimension of the code is known to be $\binom{n}{\leq d}$. (See for e.g. page 5 of \cite{abbe2021survey}).
\begin{fact}\label{dimrm}
The dimension of the Reed-Muller code $\mathsf{RM}(n,d)$ is
\begin{align*}
\textnormal{dim}\Big( \mathsf{RM}(n,d) \Big)=\binom{n}{\leq d}.
\end{align*}
\end{fact}
Throughout this section, we let $M$ be the generator matrix of $\mathsf{RM}(n,d)$; this is an $ \binom{n}{\leq d}\times N$ matrix whose rows are indexed by subsets of $[N]$ of size at most $d$, and whose columns are indexed by elements of $\F_2^n$. For $S \subseteq [n], |S| \leq d$ and $x \in \F_2^n$, the entry of $M$ whose row is indexed by $S$ and whose column is indexed by $x$ is $$M_{S,x}:=\prod_{j \in S} x_j.$$ If $S$ is empty, this entry is set to $1$.
The parity-check matrix of the Reed-Muller code is known to be the same as the generator matrix of a different Reed-Muller code. Namely, let $H$ be the $ \binom{n}{\leq n-d-1}\times N$ generator matrix for the code $\mathsf{RM}(n,n-d-1)$. Then $H$ has full rank, and $M H^\intercal  = 0$. So, the rows of $H$ are a basis for the orthogonal complement of the span of the rows of $M$.
Reed-Muller codes also have well-known algebraic features, notably transitivity (see for e.g. Lemma 23 in \cite{kudekar2016erasure}).
\begin{fact}\label{transitive}
For all non-negative integers $n$ and $d\leq n$, the Reed-Muller code $\mathsf{RM}(n,d)$ is transitive.
\end{fact}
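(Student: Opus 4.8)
The plan is to exploit the affine symmetry of polynomial evaluation, using only the translation subgroup of the affine group, which already acts transitively on $\F_2^n$. Recall that we identify the coordinate set $[N]$ with $\F_2^n$, so that a codeword of $\mathsf{RM}(n,d)$ is an evaluation vector $(f(x))_{x\in\F_2^n}$ for some polynomial $f$ of degree at most $d$. Given two coordinates $a,b\in\F_2^n$, I would set $c=a+b$ and take $\pi$ to be the translation $\pi(x)=x+c$. Since we work over $\F_2$, we have $\pi(a)=a+(a+b)=b$, which gives condition (i) of Definition \ref{deftransitive}, and $\pi$ is plainly a bijection of $\F_2^n$, hence a permutation of $[N]$.

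For condition (ii), consider an arbitrary codeword $v=(f(x))_{x\in\F_2^n}$ with $\deg f\leq d$. Permuting its coordinates by $\pi$ produces the vector $w$ with $w_x=v_{\pi(x)}=f(x+c)$. I would then argue that $g(x):=f(x+c)$ is again a multilinear polynomial of degree at most $d$; granting this, $w$ is the evaluation vector of $g$ and therefore lies in $\mathsf{RM}(n,d)$, which is exactly what condition (ii) requires.

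The only real content is the claim that the affine substitution $x\mapsto x+c$ does not raise the degree, and I expect this to be the (mild) crux of the argument. To see it, it suffices to check it monomial by monomial: a monomial $\prod_{i\in S}x_i$ of $f$ becomes $\prod_{i\in S}(x_i+c_i)$, which expands into a sum of monomials $\prod_{i\in T}x_i$ over subsets $T\subseteq S$, each of degree $|T|\leq|S|\leq d$. After reducing with the identities $x_i^2=x_i$, this remains a multilinear polynomial of degree at most $d$, so $g$ has degree at most $d$ as desired. Everything else is bookkeeping about the identification of $[N]$ with $\F_2^n$ and the fact that $a+a=0$ over $\F_2$.

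Finally, I would remark that the identical argument carried out with the more general affine map $\pi(x)=Ax+c$, for an invertible linear $A$, preserves degrees in the same way (monomials still expand into monomials of no larger degree), and since the affine group acts doubly transitively on $\F_2^n$ one recovers the stronger double transitivity mentioned in the introduction. For the present statement, however, the translations alone already suffice.
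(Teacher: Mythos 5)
Your proof is correct and follows essentially the same route as the paper's: both use the translation $\pi(x)=x+v_i+v_j$ and observe that substituting $x\mapsto x+c$ into a polynomial of degree at most $d$ yields another polynomial of degree at most $d$. Your monomial-by-monomial justification of the degree claim is a fine (slightly more explicit) version of the step the paper states without expansion.
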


 \subsection{Entropy}
The binary entropy function $h:[0,1] \rightarrow [0,1]$ is defined to be
$$h(\epsilon) := \epsilon \cdot \log\frac{1}{\epsilon} + (1-\epsilon) \cdot \log\frac{1}{1-\epsilon}.$$
One useful property of the binary entropy function is that it is subadditive.
\begin{lemma}\label{entropysubadditive}
    For any $x\in[0,1]$ and any $y\in[0,1-x]$, we have
    \begin{align*}
        h(x+y)\leq h(x)+h(y).
    \end{align*}
\end{lemma}
This is because the binary entropy function is concave, and any concave, positive function is subadditive (see for e.g. \cite{hardy1934inequalities}, page 83, statement 103).
The entropy function can be used to approximate binomial coefficients.
\begin{lemma} \label{stirling} For any integer $d\in\{1,2,\dotsc,\frac{n}{2}\} $, we have
$$\frac{1 }{\sqrt{2n}} \cdot 2^{h(d/n) \cdot n} \leq \binom{n}{d}\leq \binom{n}{\leq d}  \leq 2^{h(d/n) \cdot n}.$$
\end{lemma}
See for e.g. page 309, Lemma 7 in \cite{1977bookkrawtchouk} for the proof of the leftmost inequality, and Theorem 3.1 in \cite{binomialbound} for the proof of the rightmost inequality.

The following lemma, which is essentially a 2-way version of Pinsker's inequality, gives a useful way to bound the entropy function near $1/2$.
\begin{lemma}\label{pinsker}
For any $\mu\in(0,1)$, we have $$\frac{\mu^2}{2\ln2}\leq1-h\left(\frac{1-\mu}{2}\right)\leq\mu^2.$$
\end{lemma}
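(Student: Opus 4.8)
The plan is to reduce the statement to a clean positive power-series estimate. First I would substitute $\epsilon = \frac{1-\mu}{2}$ (so that $1-\epsilon = \frac{1+\mu}{2}$) into the definition of $h$ and convert from base-two to natural logarithms. A short expansion, using $\log\frac{2}{1\mp\mu} = 1 - \log(1\mp\mu)$ together with $\log 2 = 1$, collapses the additive constants and yields the identity
$$1 - h\Big(\tfrac{1-\mu}{2}\Big) = \frac{1}{\ln 2}\,g(\mu), \qquad g(\mu) = \tfrac12\big[(1-\mu)\ln(1-\mu) + (1+\mu)\ln(1+\mu)\big].$$
The two claimed inequalities are then exactly $\tfrac{\mu^2}{2} \le g(\mu) \le \mu^2 \ln 2$.

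Next I would compute the Taylor series of $g$ about $\mu = 0$. Writing $f(x) = (1+x)\ln(1+x) = x + \sum_{k\ge 2}\frac{(-1)^k}{k(k-1)}x^k$ and noting that $g(\mu) = \tfrac12\big[f(\mu) + f(-\mu)\big]$, the odd powers cancel and the even powers double, giving
$$g(\mu) = \sum_{m\ge 1} \frac{\mu^{2m}}{2m(2m-1)} = \frac{\mu^2}{2} + \frac{\mu^4}{12} + \frac{\mu^6}{30} + \cdots,$$
a series with all coefficients positive, valid for $|\mu| < 1$. The lower bound is now immediate: discarding every term past the first gives $g(\mu) \ge \frac{\mu^2}{2}$, hence $1 - h\big(\tfrac{1-\mu}{2}\big) \ge \frac{\mu^2}{2\ln 2}$.

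For the upper bound I would factor out $\mu^2$ and write $g(\mu) = \mu^2 \sum_{m\ge 1}\frac{\mu^{2m-2}}{2m(2m-1)}$. Since every coefficient is positive, the bracketed series is increasing in $\mu$ on $(0,1)$ and is therefore bounded above by its value at the endpoint $\mu = 1$, namely $\sum_{m\ge 1}\frac{1}{2m(2m-1)}$. Recognizing this as the telescoped alternating harmonic series $\sum_{m\ge 1}\big(\frac{1}{2m-1} - \frac{1}{2m}\big) = \ln 2$ finishes the estimate: $g(\mu) \le \mu^2\ln 2$, so $1 - h\big(\tfrac{1-\mu}{2}\big) \le \mu^2$.

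The only real work is the bookkeeping in the second step: getting the coefficients of $f(x)$ right and confirming that the odd terms of $g$ vanish so that the resulting series is positive-termed. The one nontrivial identification is that the limiting sum at $\mu = 1$ equals $\ln 2$. Once the positive power-series representation is in hand, both inequalities drop out by comparing $g$ with its leading term and with its value at $\mu = 1$, and no calculus beyond term-by-term manipulation is needed.
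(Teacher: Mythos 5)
Your proposal is correct and is essentially the paper's own argument: both reduce $1-h\bigl(\tfrac{1-\mu}{2}\bigr)$ to the positive even power series $\tfrac{1}{2\ln 2}\sum_{i\ge 1}\tfrac{\mu^{2i}}{i(2i-1)}$, take the first term for the lower bound, and bound the coefficient sum by its value at $\mu=1$ (the alternating harmonic series, $=\ln 2$ after rescaling) for the upper bound. The only cosmetic difference is that you package the expansion through $f(x)=(1+x)\ln(1+x)$ and symmetrization rather than expanding the two logarithms directly.
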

See Appendix \ref{apinsker} for the proof.
\subsection{Probability distributions}
There are two types of probability distributions that we will use frequently. The first one is the $\epsilon$-Bernoulli distribution over $\mathbb{F}_2^N$, which we will denote by
$$P_\epsilon (z):=\epsilon^{\textnormal{wt}(z)}(1-\epsilon)^{N-\textnormal{wt}(z)}.$$
The second one is the uniformly random distribution over some set $T$, which we will denote by
$$\mathcal{D}(T)(z):=\begin{cases}
\frac{1}{|T|} & \text{if $z\in T$,}\\
0 & \text{otherwise.}
\end{cases}.$$
There are two particular cases for the uniform distribution that will occur often enough that we attribute them their own notation. The first one is the uniform distribution over $\mathbb{F}_2^t$, which we will denote by
$$\mu_t:=\mathcal{D}(\mathbb{F}_2^t).$$
The second one is the uniform distribution over all vectors $z\in\mathbb{F}_2^N$ of weight $\textnormal{wt}(z)\in S$, for some $S\subseteq\{0,1,\dotsc,N\}$. We will denote this probability distribution by
$$\lambda_{S}:=\mathcal{D}(\{z\in\mathbb{F}_2^N:\textnormal{wt}(z)\in S\}).$$

\subsection{Probability theory}
We will need two very standard results of probability theory (see for e.g. \cite{theorybook}): Markov's inequality and Chernoff's bound. We start with Markov's inequality.
\begin{lemma}\label{markov}
Let $X$ be a non-negative random variable. Then for any $a>0$, we have
$$\Pr[X\geq a]\leq \frac{\E[X]}{a}.$$
\end{lemma}
 We will also need Chernoff's bound:
 \begin{lemma}\label{chernoff}
 Let $X_1,X_2,\dotsc,X_m$ be i.i.d. random variables taking values in $\{0,1\}$, and define $X:=X_1+X_2+\dotsc+X_m$. Then for any $\delta\in (0,1)$, we have
 $$\Pr\Big[\big|X-\E[X]\big|>\delta\cdot m \E[X_1]\Big]\leq 2e^{-\frac{\delta^2 \cdot m\E[X_1]}{3}}.$$
 \end{lemma}
\subsection{Fourier analysis}\label{sectionfourier}
The Fourier basis is a useful basis for the space of functions mapping $\F_2^N$ to the real numbers. We recall some of its properties below (see for e.g. \cite{2008wolffourier}). For $f,g \in \F_2^N\rightarrow \R$, define the inner product $$ \langle f, g \rangle := \frac{1}{2^N}\sum_{x \in \F_2^N} f(x) g(x).$$
For every $x,y \in \F_2^N$, define the character
$$ \chi_y(x) := (-1)^{\sum_{j=1}^N  x_j y_j }.$$ These functions form an orthonormal basis, namely for $y,y' \in \F_2^N$,
\begin{align*}
\langle \chi_y, \chi_{y'} \rangle = \begin{cases}
1 & \text{if $y=y'$,}\\
0 & \text{otherwise.}
\end{cases}
\end{align*}
We define the Fourier coefficients $\hat{f}(y) := \langle f, \chi_y \rangle$. Then for $f,g : \F_2^N \rightarrow \R$, we have
$$\langle f, g \rangle = \sum_{y \in \F_2^N} \hat{f}(y) \cdot \hat{g}(y).$$
In particular, $$ \frac{1}{2^N}\sum_{x\in\F_2^N}f(x)^2 = \sum_{y\in\F_2^N} \hat{f}(y)^2.$$

\section{Collisions vs decoding}\label{collisionsdecoding}
Recall that we denote by $P_\epsilon$ the $\epsilon$-Bernoulli distribution over $\mathbb{F}_2^N$, i.e. the distribution
$$P_\epsilon (z):=\epsilon^{\textnormal{wt}(z)}(1-\epsilon)^{N-\textnormal{wt}(z)}.$$
Recall also that for any subset $S\subseteq \{0,1,\dotsc,N\}$, we denote by $\lambda_{S}$ the uniform distribution over all strings $z\in\mathbb{F}_2^N$ of weight $\textnormal{wt}(z)\in S$, i.e.
$$\lambda_{S}(z):=\begin{cases}
\frac{1}{\sum_{j\in S} \binom{N}{j}} & \text{if $\textnormal{wt}(z)\in S$,}\\
0 & \text{otherwise.}
\end{cases}$$
The goal of this section will be to analyze the relationship between the decoding of an error string $\rho\in\mathbb{F}_2^N$ and the collision probability of strings $z\in\mathbb{F}_2^N$ within the map $z\mapsto Hz^\intercal$. Intuitively, the more collisions there are within this mapping, the harder it is for our decoder to correctly identify the error string $\rho$ upon seeing only its image $H\rho^\intercal$. However, certain error strings might be unlikely enough to occur that our decoder can safely ignore them. For example, if we are interested in an $\epsilon$-noisy error string $\rho$, then $\rho$ is unlikely to have weight $\textnormal{wt}(\rho)$ far away from $\epsilon N$. We could thus choose to ignore all strings whose weights do not lie in the set $S=[\epsilon N\pm l]\cap \N$, for some integer $l$. In order to analyze the collisions that occur when strings are required to have weight $\textnormal{wt}(z)\in S$, we define for every $z\in\mathbb{F}_2^N$ and every $S\subseteq \{0,1,\dotsc,N\}$ the set of $S$-colliders of $z$, i.e. the set of strings $y$ that lie in the coset of $z$ and have weight $\textnormal{wt}(y)\in S$:
\begin{definition}
For any $z\in\mathbb{F}_2^N$, any matrix $H$ with $N$ columns and entries in $\F_2$, and any subset $S\subseteq\{0,1,\dotsc,N\}$, define
$$\Omega_z^{S,H}:=\left\{y\in \mathbb{F}_2^N :\textnormal{wt}(y)\in S\textnormal{ and } Hy^{\intercal}=Hz^\intercal   \right\}.$$
When $H$ is clear from context, we will drop the superscript and write $\Omega_z^{S}$.
\end{definition}
This definition captures a natural parameter for how large of a list we need before we can confidently claim that it contains the error string: if we are given $H\rho^\intercal$ and are told that with high probability the error string $\rho$ has weight $\textnormal{wt}(\rho)\in S$, then we should output the list $\Omega_\rho^S$. For unique decoding we want to argue that $|\Omega_\rho^S|=1$ with high probability, whereas for list decoding we want to argue that $|\Omega_\rho^S|\leq k$ with high probability, for some integer $k>1$. The expectation of $|\Omega_\rho^S|$ will thus be a key quantity in our analysis. We will call this expectation the "collision count."
\begin{definition}
For any subset $S\subseteq \{0,1,\dotsc,N\}$ and any matrix $H$ with $N$ columns and entries in $\mathbb{F}_2$, define
\begin{align*}
\mathsf{Coll}_H(S)&:=\E_{z\sim \lambda_S}\big[|\Omega_z^S|\big].
\end{align*}
\end{definition}
When the set $S$ only contains one or two elements (i.e. $S=\{w\}$ or $S=\{w,w'\}$), we will abuse notation and write $\mathsf{Coll}_H(w)$ and $\mathsf{Coll}_H(w,w')$ to mean $\mathsf{Coll}_H(\{w\})$ and $\mathsf{Coll}_H(\{w,w'\})$ respectively. In the following lemma, we use Markov's inequality to bound the probability of a list decoding error in terms of $\mathsf{Coll}_H(S)$.
\begin{lemma}\label{collisionuniform}
For any subset $S\subseteq \{0,1,\dotsc,N\}$, any matrix $H$ with $N$ columns and entries in $\F_2$, and any integer $k\geq 1$, we have
\begin{align*}
  \Pr_{\rho\sim \lambda_{S}}\big[|\Omega_\rho^S|>k\big]\leq\frac{ \mathsf{Coll_H}(S) -1}{k}.
\end{align*}
\end{lemma}
\begin{proof}
Note that $|\Omega_z^S|\geq 1$ for any $z\in\F_2^N$ with weight $\textnormal{wt}(z)\in S$, so the random variable $|\Omega_\rho^S|- 1$ is always non-negative. Applying Markov's inequality (i.e. Lemma \ref{markov}), we then have
\begin{align*}
  \Pr_{\rho\sim \lambda_{S}}\big[|\Omega_\rho^S|>k\big]&=\Pr_{\rho\sim \lambda_{S}}\big[|\Omega_\rho^S|-1\geq k\big]\\
  &\leq\frac{ \mathsf{Coll_H}(S) -1}{k}.
\end{align*}
\end{proof}
When the error string $\rho$ is sampled uniformly at random from the set $\{z\in\mathbb{F}_2^N:\textnormal{wt}(z)\in S\}$, the above lemma allows us to relate the decoding error probability to the collision count $\mathsf{Coll_H}(S)$. The problem we are most interested in, however, is when $\rho$ is sampled not from some uniform distribution, but from the $\epsilon$-noisy probability distribution $P_\epsilon$. We will now show how to connect these two decoding problems. The intuition is that
by the Chernoff bound, we only need to concern ourselves with strings whose weights lie in $S=[\epsilon N\pm l]\cap \N$, for some appropriately chosen $l$. But in this weight band all strings have similar weight, and so are given similar probability under the distribution $P_\epsilon$. Intuitively, the $P_\epsilon$-decoder must then perform very similarly to the $\lambda_S$-decoder. The following proposition makes this idea precise, and then uses Lemma \ref{collisionuniform} to bound the probability of a decoding error. Recall that $D_k:\F_2^t\rightarrow(\F_2^N)^{\otimes k}$ is the max-likelihood decoder $$D_k(x):=\mathop{\textnormal{argmin}}_{\substack{\{z_1,z_2\dotsc,z_k\}\subseteq \mathbb{F}_2^N\\Hz_i^\intercal=x^\intercal \textnormal{ for all }i}}\{\textnormal{wt}(z_1)+\textnormal{wt}(z_2)+\dotsc+\textnormal{wt}(z_k)\},$$
where ties are broken according to the lexicographic order.
\begin{prop}\label{collisiondecoding}
Let $H$ be any matrix with $N$ columns and entries in $\F_2$. Consider any noise parameter $\epsilon\in(0,\frac{1}{2})$ and any $ l\in[1,\min\{ \epsilon N,(\frac{1}{2}-\epsilon)N\}]$. Then
\begin{enumerate}[label=(\roman*)]
    \item We have the following unique-decoding bound.
    \begin{align*}
    \Pr_{\rho\sim P_\epsilon}[\rho\notin D_1(H\rho^\intercal)]&\leq 2e^{-\frac{l^2}{3\epsilon N}} +4(l+1) \mathop{\max}_{\substack{S\subseteq [\epsilon N\pm l]\cap\N \\
    1\leq|S|\leq 2}} \Big\{ \mathsf{Coll_H}(S)-1 \Big\}.
\end{align*}
    \item Consider some integer $k> 1$ satisfying $\frac{k}{2l+1}\in\N$. Then we have the following list-decoding bound for list size $k$.
\begin{align*}
    \Pr_{\rho\sim P_\epsilon}[\rho\notin D_k(H\rho^\intercal)]&\leq 2e^{-\frac{l^2}{3\epsilon N}} +\frac{4(l+1)}{k} \mathop{\max}_{\substack{w\in [\epsilon N\pm l]\cap\N }} \Big\{ \mathsf{Coll_H}(w)-1 \Big\}.
\end{align*}
\end{enumerate}
\end{prop}

\begin{proof}
We will consider the unique decoding case ($k=1$) and the list-decoding case ($k>1$) separately.
\hfil\\
\textbf{Case 1: Unique decoding, i.e. $k=1$}
\hfil\\
Let $t$ be the number of rows in the matrix $H$. We will show that a slightly less performant decoder $\tilde{D}_1:\mathbb{F}_2^t\rightarrow \mathbb{F}_2^N$ satisfies the desired probability bound. We define $\tilde{D}_{1}$ as follows: upon receiving input $x\in\mathbb{F}_2^t$, $\tilde{D}_1$ outputs the minimum-weight string from the set $\big\{z\in\F_2^N:Hz^\intercal=x^\intercal,\textnormal{wt}(z)\in[\epsilon N\pm l]\cap\N\big\}.$
If this set is empty, the decoder fails. If there are multiple minimal-weight strings in the set, the decoder outputs the first one in the lexicographic order. It is clear that
$$\Pr_{\rho\sim P_\epsilon}[ \rho\neq D_{1}(H\rho^\intercal)]\leq \Pr_{\rho\sim P_\epsilon}[ \rho\neq \tilde{D}_1(H\rho^\intercal)],$$
since $D_1$ always returns the most likely string whereas $\tilde{D}_1$ may not.
We thus turn to proving the desired bound for $\tilde{D}_{1}$. We first bound the probability that the error string $\textnormal{wt}(\rho)$ be far away from its mean. Letting
$$B=\{z\in\mathbb{F}_2^N:\big|\textnormal{wt}(z)-\epsilon N\big|\leq l\},$$
we have by Chernoff's bound (i.e. Lemma \ref{chernoff}) that
\begin{align}\label{chernoffonk1}
    \Pr_{\rho\sim P_\epsilon}[ \rho\neq \tilde{D}_{1}(H\rho^\intercal)]&\leq \Pr_{\rho\sim P_\epsilon}[\rho\notin B]+\Pr_{\rho\sim P_\epsilon}[ \rho\neq \tilde{D}_{1}(H\rho^\intercal)\big|\rho\in B]\nonumber\\
    &\leq 2e^{-\frac{l^2}{3\epsilon N}} +\Pr_{\rho\sim P_\epsilon}[ \rho\neq \tilde{D}_{1}(H\rho^\intercal)\big|\rho\in B].
\end{align}
We want to bound the second term. For any $\rho\in B$, we define the set of "problematic weights" $S(\rho):=\{\lceil\epsilon N-l\rceil,\lceil\epsilon N-l\rceil+1,\dotsc,\textnormal{wt}(\rho)\}$ . We note that for $\rho\in B$, our decoder $\tilde{D}_{1}$ can only fail if there is some string $z\neq \rho$ satisfying $Hz^\intercal=H\rho^\intercal$ and $\textnormal{wt}(z)\in S(\rho)$. Recalling the definition $\Omega_\rho^S:=\{z:Hz^\intercal=H\rho^\intercal, \textnormal{wt}(z)\in S\}$, we can then rewrite our equation (\ref{chernoffonk1}) as
\begin{align*}
    \Pr_{\rho\sim P_\epsilon}[ \rho\neq \tilde{D}_{1}(H\rho^\intercal)]&\leq 2e^{-\frac{l^2}{3\epsilon N}}+\Pr_{\rho\sim P_\epsilon}\big[ |\Omega_\rho^{S(\rho)}|>1 \big|\rho\in B \big].
\end{align*}
Considering the most problematic weight level $w$ within the region $[\epsilon N\pm l]\cap\N$ and using a union bound over all lower levels $w'\leq w$, we get
\begin{align*}
    \Pr_{\rho\sim P_\epsilon}[ \rho\neq \tilde{D}_{1}(H\rho^\intercal)]&\leq 2e^{-\frac{l^2}{3\epsilon N}}+\max_{w\in[\epsilon N\pm l]\cap\N}\left\{\Pr_{\rho\sim P_\epsilon}\big[ |\Omega_\rho^{S(\rho)}|>1 \big|\textnormal{wt}(\rho)=w \big]\right\}\\
    &\leq 2e^{-\frac{l^2}{3\epsilon N}}\\
&\quad+(2l+1)\mathop{\max}_{\substack{w,w'\in[\epsilon N\pm l]\cap\N\\w'\leq w}}\left\{\Pr_{\rho\sim P_\epsilon}\big[ |\Omega_\rho^{\{w,w'\}}|>1\big|\textnormal{wt}(\rho)=w \big]\right\}.
\end{align*}
We now note that under the condition $\textnormal{wt}(\rho)=w$, the $\epsilon$-noisy probability distribution $P_\epsilon(\rho)$ and the uniform probability distribution $\lambda_{\{w,w'\}}(\rho)$ are identical (they are both uniform on strings of weight $w$). We can thus rewrite our last inequality as
\begin{align*}
    \Pr_{\rho\sim P_\epsilon}[ \rho\neq \tilde{D}_{1}(H\rho^\intercal)]    & \leq 2e^{-\frac{l^2}{3\epsilon N}}\\
& +(2l+1)\mathop{\max}_{\substack{w,w'\in[\epsilon N\pm l]\cap\N\\w'\leq w}}\left\{\Pr_{\rho\sim \lambda_{\{w,w'\}}}\big[ |\Omega_\rho^{\{w,w'\}}|>1\big|\textnormal{wt}(\rho)=w \big]\right\}.
\end{align*}
But by basic conditional probability we know that
{\small $$\Pr_{\rho\sim \lambda_{\{w,w'\}}}\big[ |\Omega_\rho^{\{w,w'\}}|>1 \big]\geq \Pr_{\rho\sim \lambda_{\{w,w'\}}}\big[ \textnormal{wt}(\rho)=w \big] \Pr_{\rho\sim \lambda_{\{w,w'\}}}\big[ |\Omega_\rho^{\{w,w'\}}|>1\big|\textnormal{wt}(\rho)=w \big],$$}
so we can bound our previous expression by
\begin{equation}\label{useconditional}
\begin{split}
    \Pr_{\rho\sim P_\epsilon}[ \rho\neq \tilde{D}_{1}(H\rho^\intercal)]    &\leq 2e^{-\frac{l^2}{3\epsilon N}}\\
&\quad+(2l+1)\mathop{\max}_{\substack{w,w'\in[\epsilon N\pm l]\cap\N\\w'\leq w}}\left\{\frac{\Pr_{\rho\sim \lambda_{\{w,w'\}}}\big[ |\Omega_\rho^{\{w,w'\}}|>1 \big]}{\Pr_{\rho\sim \lambda_{\{w,w'\}}}\big[ \textnormal{wt}(\rho)=w \big]}\right\}.
\end{split}
\end{equation}
Now, from our theorem's assumption on $l$, we know that any $w,w'\in [\epsilon N\pm l]\cap\N$ must lie in the interval $[0,\frac{N}{2}]$. Combining this with the fact that $w'\leq w$, we have
\begin{align}\label{boundbyhalf}
\Pr_{\rho\sim \lambda_{\{w,w'\}}}\big[ \textnormal{wt}(\rho)=w \big]=\frac{\binom{N}{w}}{\binom{N}{\{w,w'\}}}\geq\frac{\binom{N}{w}}{\binom{N}{w}+\binom{N}{w'}}\geq \frac{1}{2}.
\end{align}
We note that the inequality above holds for both the case $w\neq w'$ and the case $w=w'$. (When $w=w'$, we have $\binom{N}{\{w,w'\}}=\binom{N}{w}\leq \binom{N}{w}+\binom{N}{w'}$). It then follows from (\ref{useconditional}) and (\ref{boundbyhalf}) that
\begin{align*}
    \Pr_{\rho\sim P_\epsilon}\big[ \rho\notin \tilde{D}_1(H\rho^\intercal) \big]\leq  2e^{-\frac{l^2}{3\epsilon N}}+      2(2l+1)\cdot \mathop{\max}_{\substack{S\subseteq [\epsilon N\pm l]\cap\N \\
    |S|\in\{1,2\}}} \left\{\Pr_{\rho\sim \lambda_{S}}\big[|\Omega_\rho^{S}|>1  \big]\right\}.
\end{align*}
The theorem statement then follows from Lemma \ref{collisionuniform}.

\newpage
\hfil\\
\textbf{Case 2: List decoding, i.e. $k>1$}
\hfil\\
Let $t$ be the number of rows in the matrix $H$. We will show that a slightly less performant decoding function $D_{k,l}:\mathbb{F}_2^t\rightarrow (\mathbb{F}_2^N)^{\otimes k}$ satisfies the desired probability bound. We define $D_{k,l}$ as follows: upon receiving input $x\in\mathbb{F}_2^t$, $D_{k,l}$ outputs $\frac{k}{2l+1}$ strings from $\{z\in\mathbb{F}_2^N:Hz=x, \textnormal{wt}(z)=w\}$, for each $w\in[\epsilon N\pm l]\cap\N$. If there are fewer than $\frac{k}{2l+1}$ strings in some level $w$, the decoder returns all of them. If there are more than $\frac{k}{2l+1}$ strings in some level $w$, the decoder returns the first $\frac{k}{2l+1}$ ones in lexicographic order. It is clear that for any $l$ we have
$$\Pr_{\rho\sim P_\epsilon}[ \rho\notin D_{k}(H\rho^\intercal)]\leq \Pr_{\rho\sim P_\epsilon}[ \rho\notin D_{k,l}(H\rho^\intercal)],$$
since $D_k$ returns the $k$ most likely strings while $D_{k,l}$ returns at most $k$ strings. We thus turn to proving the desired bound for $D_{k,l}$. Letting
$$B=\Big\{z\in\mathbb{F}_2^N:\big|\textnormal{wt}(z)-\epsilon N\big|\leq l\Big\},$$
we have by Chernoff's bound (i.e. Lemma \ref{chernoff}) that
\begin{align*}
    \Pr_{\rho\sim P_\epsilon}[ \rho\notin D_{k,l}(H\rho^\intercal)]&\leq \Pr_{\rho\sim P_\epsilon}[\rho\notin B]+\Pr_{\rho\sim P_\epsilon}[ \rho\notin D_{k,l}(H\rho^\intercal)\big|\rho\in B]\\
    &\leq 2e^{-\frac{l^2}{3\epsilon N}}+\max_{w\in[\epsilon N\pm l]\cap\N}\left\{\Pr_{\rho\sim P_\epsilon}[ \rho\notin D_{k,l}(H\rho^\intercal)\big|\textnormal{wt}(\rho)=w]\right\}.
\end{align*}
Since the distribution $P_\epsilon$ gives the same probability to any two strings of equal weights, we get
\begin{align*}
    \Pr_{\rho\sim P_\epsilon}[ \rho\notin D_{k,l}(H\rho^\intercal)] &\leq 2e^{-\frac{l^2}{3\epsilon N}}+\max_{w\in[\epsilon N\pm l]\cap\N}\left\{\Pr_{\rho\sim \lambda_{\{w\}}}[ \rho\notin D_{k,l}(H\rho^\intercal)]\right\}\\
    &\leq 2e^{-\frac{l^2}{3\epsilon N}}+\max_{w\in[\epsilon N\pm l]\cap\N}\left\{\Pr_{\rho\sim \lambda_{\{w\}}}\big[|\Omega_\rho^{\{w\}}|>\frac{k}{2l+1}\big]\right\}.
\end{align*}
Applying Lemma \ref{collisionuniform}, we get
\begin{align*}
    \Pr_{\rho\sim P_\epsilon}[ \rho\notin D_{k,l}(H\rho^\intercal)] &\leq 2e^{-\frac{l^2}{3\epsilon N}}+\frac{2l+1}{k}\cdot \max_{w\in[\epsilon N\pm l]\cap\N} \Big\{ \mathsf{Coll_H}(w)-1 \Big\}.
\end{align*}
\end{proof}

\section{A criterion for decoding}\label{weightdecoding}
In this section, we give a criterion that certifies that a linear code $C\subseteq\F_2^N$ is resilient to errors of probability $\epsilon$. We give such a criterion for both unique decoding and list decoding. The function we will need to make this connection is the Krawtchouk polynomial of degree $s$, which is defined as
$$K_s(x):=\sum_{j=0}^s(-1)^j\binom{x}{j}\binom{N-x}{s-j},$$
where for any polynomial $p(x)$ we defined $\binom{p(x)}{j}:=\frac{p(x)(p(x)-1)\dotsc (p(x)-j+1)}{j!}$.
For vectors $v\in\F_2^N$, we will abuse notation and write $K_s(v)$ to mean $K_s(\textnormal{wt}(v)).$ For convenience, we also define for any $S\subseteq\{0,1,\dotsc,N\}$ the function
$$K_S(x):=\sum_{s\in S}K_s(x).$$
In the following proposition, we use basic Fourier analysis tools to rewrite the collision count $\mathsf{Coll}_H(S)$ in terms of the Krawtchouk polynomial $K_S$. We note that Proposition \ref{collisionprob} was previously proven in a different form in \cite{2021bargfourierlink} (see Theorem 2.1 and Lemma 4.1), and can be seen as describing the coset weight distribution of the code. Recall that we use $\mu_t$ to denote the uniform distribution over all vectors in $\mathbb{F}_2^t$, and that we use the notation $\binom{N}{S}:=\sum_{s\in S}\binom{N}{s}$.
\begin{prop}\label{collisionprob}
Fix $\epsilon\in(0,\frac{1}{2})$, and let $H$ be a $t\times N$ matrix with entries in $\mathbb{F}_2$. Then for any $S\subseteq\{0,1,\dotsc,N\}$, we have
\begin{align*}
\mathsf{Coll}_H(S)&=\frac{1}{\binom{N}{S}}\E_{v\sim\mu_t}[K_S(vH)^2].
\end{align*}
\end{prop}

\begin{proof}
The main tool we will use is Parseval's Identity, which relates the evaluations $f(x)$ of a function $f:\mathbb{F}_2^t\rightarrow \R$ to its Fourier coefficients $\hat{f}(y)$ by
\begin{align}\label{parseval}
    \frac{1}{2^t}\sum_{x\in\mathbb{F}_2^t} f(x)^2=\sum_{y\in\mathbb{F}_2^t}\hat{f}(y)^2.
\end{align}
We will first need to rewrite $\mathsf{Coll}_H(S)$ as the $\ell_2$ norm of some function $f$. For this, we recall the definition $\Omega_z^S:=\left\{y\in \mathbb{F}_2^N :\textnormal{wt}(y)\in S\textnormal{ and } Hy^{\intercal}=Hz^\intercal   \right\}$ and note that
\begin{align*}
\mathsf{Coll}_H(S)&:=\frac{1}{\binom{N}{S}}\sum_{z\in\F_2^N:\textnormal{wt}(z)\in S}|\Omega_z^S|\\
&= \binom{N}{S}\sum_{z\in\F_2^N:\textnormal{wt}(z)\in S}\frac{1}{|\Omega_z^S|}\Pr_{a\sim\lambda_S}[Ha^\intercal=Hz^\intercal]^2\\
&=\binom{N}{S}\sum_{x \in \F_2^{t}} \Pr_{z\sim\lambda_S}[Hz^\intercal=x^\intercal]^2.
\end{align*}
We are now ready to apply Parseval's Identity. Letting $f(x) =  \Pr_{z\sim \lambda_S}[Hz^\intercal = x^\intercal]$ in equation (\ref{parseval}), we get
\begin{align*}
\mathsf{Coll}_H(S)& = \binom{N}{S}\sum_{x \in \F_2^{t}} f(x)^2 \\
& = 2^t\binom{N}{S}\sum_{y \in \F_2^{t}} \hat{f}(y)^2.
\end{align*}
But by definition of the Fourier transform, we have $$\hat{f}(y):=2^{-t}\sum_{x \in \F_2^{t}} \frac{1}{\binom{N}{S}}\big|\{z\in\F_2^N:\textnormal{wt}(z)\in S\textnormal{ and }Hz^\intercal=x^\intercal\}\big|  \cdot(-1)^{y\cdot x^\intercal},$$ so our previous equation can be rewritten as
\begin{align}\label{expand1}
&\mathsf{Coll}_H(S) \nonumber\\
&\quad=2^{t}\binom{N}{S} \sum_{y \in \F_2^{t}} \Big (2^{-t} \sum_{x \in \F_2^{t}} \frac{1}{\binom{N}{S}}(-1)^{y\cdot x^\intercal }\cdot\big|\{z\in\F_2^N:\textnormal{wt}(z)\in S\textnormal{ and }Hz^\intercal=x^\intercal\}\big|     \Big)^2\nonumber\\
&\quad= 2^{-t}\frac{1}{\binom{N}{S}} \sum_{y \in \F_2^{t}} \Big ( \sum_{\substack{z\in\mathbb{F}_2^N \\ \textnormal{wt}(z)\in S}}   (-1)^{y\cdot Hz^\intercal } \Big)^2.
\end{align}
We now note that by definition, for any non-negative integer $s\leq N$ we have
\begin{align*}
    K_s(yH)&:=\sum_{j=0}^s(-1)^j\binom{\textnormal{wt}(yH)}{j}\binom{N-\textnormal{wt}(yH)}{s-j}\\
    &=\sum_{\substack{z\in\mathbb{F}_2^N \\ \textnormal{wt}(z)=s}}  (-1)^{yH\cdot z^\intercal},
\end{align*}
where we used the convention that $\binom{a}{b}=0$ when $a<b$. Combining this with equation (\ref{expand1}), we get
\begin{align*}
\mathsf{Coll}_H(S)&=\frac{2^{-t}}{\binom{N}{S}} \sum_{y \in \F_2^{t}} K_S(yH)^2.
\end{align*}
\end{proof}
We will now combine Propositions \ref{collisiondecoding} and \ref{collisionprob} to obtain Theorem \ref{fouriercriterionunique}, i.e. to obtain a bound on the decoding error probability in terms of Krawtchouk polynomials. We prove a generalized version of Theorem \ref{fouriercriterionunique} below. To recover Theorem \ref{fouriercriterionunique}, set the list size $k=1$ and set $l= N^{3/4} $, and apply Lemma \ref{eqvltdecoder}.
(You want to think of the parameter $l$ as being $l>>\sqrt{N}$ in both the case $k=1$ and the case $k>1$, so that the error term $e^{-\frac{l^2}{3\epsilon N}}$ is small).
\begin{theorem}\label{fouriercriteriongeneral}
Let $H$ be any $t\times N$ matrix with entries in $\mathbb{F}_2$. Consider any noise parameter $\epsilon\in(0,\frac{1}{2})$ and any $ l\in[1,\min\{ \epsilon N,(\frac{1}{2}-\epsilon)N\}]$. Then
\begin{enumerate}[label=(\roman*)]
    \item We have the following unique-decoding bound.
    \begin{align*}
    \Pr_{\rho\sim P_\epsilon}[\rho\notin D_1(H\rho^\intercal)]&\leq 2e^{-\frac{l^2}{3\epsilon N}} +4(l+1) \mathop{\max}_{\substack{S\subseteq [\epsilon N\pm l]\cap\N \\
    1\leq|S|\leq 2}} \Big\{  \frac{1}{\binom{N}{S}}\E_{v\sim\mu_t}\big[ K_S(vH)^2\big]-1 \Big\}.
\end{align*}
    \item Consider some integer $k> 1$ satisfying $\frac{k}{2l+1}\in\N$. Then we have the following list-decoding bound for list size $k$.
\begin{align*}
    \Pr_{\rho\sim P_\epsilon}[\rho\notin D_k(H\rho^\intercal)]&\leq 2e^{-\frac{l^2}{3\epsilon N}} +\frac{4(l+1)}{k} \mathop{\max}_{\substack{w\in [\epsilon N\pm l]\cap\N }} \Big\{  \frac{1}{\binom{N}{w}}\E_{v\sim\mu_t}\big[ K_w(vH)^2\big]-1 \Big\}.
\end{align*}
\end{enumerate}
\end{theorem}

\begin{proof}
The theorem statement follows directly from Propositions \ref{collisiondecoding} and \ref{collisionprob}.
\end{proof}
One consequence of Theorem \ref{fouriercriteriongeneral} is Corollary \ref{weightunique}, which states that $C$ is resilient to $\epsilon$-errors if the weight distribution of $C^\perp$ is close enough to the binomial distribution (see Appendix \ref{abinomialweight} for the proof).
As another application of Theorem \ref{fouriercriteriongeneral}, we present the following bound on the probability of making a list-decoding error for a code $C$. We note that once again, our bound depends only on the weight distribution of the dual code $C^\perp$.

\begin{prop}\label{weightcriterion}
Fix any $\epsilon\in(0,\frac{1}{2})$, and define
$\beta:=\frac{1-2\sqrt{\tilde{\epsilon}(1-\tilde{\epsilon})}}{2}$ for
$\tilde{\epsilon}=\epsilon+\frac{1}{\sqrt{\log N}}.$
Let $B=[\beta N,(1-\beta)N]\cap\N $, and let $k^*=(2\lfloor \frac{N}{\sqrt{\log N}}\rfloor +1)m$ for some integer $m> 0$.
Then for any integer $N> 2^{\frac{1}{\epsilon^2(1-\epsilon)^2}+1} $ and all list sizes $k\geq k^*$, we have that any $t\times N$ matrix $H$ with entries in $\mathbb{F}_2$ satisfies
\begin{align*}
\Pr_{\rho\sim P_\epsilon}[\rho\notin D_k(H\rho^\intercal)]&\leq 2e^{-\frac{N}{4\epsilon\log N}}+\frac{N}{k^*} \max_{j\in B}\left\{ \Pr_{v\sim \mu_t}[\textnormal{wt}(vH)=j] \cdot \frac{2^N}{\binom{N}{j}}-1\right\} \nonumber\\
&+\frac{2^{h(\epsilon)N+5h(\frac{1}{\sqrt{\log N}})N}}{k^*}  \max_{ j\notin B}\left\{ \Pr_{v\sim \mu_t}\big[\textnormal{wt}(vH)=j\big]\cdot  2^{2\epsilon N\log|1-\frac{2j}{N}|} \right\}.
\end{align*}
\end{prop}

\begin{proof}
We will use Theorem \ref{fouriercriteriongeneral} to bound the decoding error probability in terms of the Krawtchouk polynomials $K_S(j)$ and the probability factors $\Pr_{v\sim\mu_{t}}\big[\textnormal{wt}(vH)=j\big] $. Some of these terms will then be bounded using Proposition \ref{IFourier}, and some will be bounded using Theorem \ref{thmkrawtchoukbound}.
We proceed with the proof; applying Theorem \ref{fouriercriteriongeneral} to the list size $k^*$ with parameter $l=\lfloor \frac{N}{\sqrt{\log N}}\rfloor$, we get
\begin{equation}\label{startingpoint}
\begin{split}
    \Pr_{\rho\sim P_\epsilon}[\rho\notin D_k(H\rho^\intercal)]&\leq \Pr_{\rho\sim P_\epsilon}[\rho\notin D_{k^*}(H\rho^\intercal)]\\
    &\leq 2e^{-\frac{N}{4\epsilon\log N}}\\
&+\frac{N}{k^*}  \mathop{\max}_{\substack{w\in [\epsilon N\pm \frac{N}{\sqrt{\log N}}]\cap\N }} \Big\{  \frac{1}{\binom{N}{w}}\sum_{j=0}^N \Pr_{v\sim\mu_{t}}\big[\textnormal{wt}(vH)=j\big] K_w(j)^2 -1 \Big\}.
\end{split}
\end{equation}
We want to bound the summation in the second term. We will start with the central terms $j\in B$. For these we rely on Proposition \ref{IFourier}, which states that $\frac{2^{-N}}{\binom{N}{w}}\sum_{j = 0}^N  \binom{N}{j} \cdot K_w(j)^2=1$ for all $w\in\{0,1,\dotsc,N\}$. For any $w\in\{0,1,\dotsc,N\}$, we thus get
\begin{align}\label{center1}
&\frac{1}{\binom{N}{w}}\sum_{j\in B} \Pr_{v\sim\mu_{t}}\big[\textnormal{wt}(vH)=j\big] K_w(j)^2\nonumber\\
&\quad\quad\quad\quad\quad\quad\quad\quad\leq \frac{1}{\binom{N}{w}}\max_{j\in B}\left\{ \Pr_{v\sim \mu_t}[\textnormal{wt}(vH)=j]\cdot \frac{1}{\binom{N}{j}}  \right\}\sum_{j\in B} \binom{N}{j} \cdot K_w(j)^2\nonumber \\
&\quad\quad\quad\quad\quad\quad\quad\quad\leq 2^{N}\max_{j\in B}\left\{ \Pr_{v\sim \mu_t}[\textnormal{wt}(vH)=j] \cdot \frac{1}{\binom{N}{j}}\right\}.
\end{align}
We then want to bound the contribution of the faraway terms $j\notin B$ to the summation in (\ref{startingpoint}), i.e. we want to bound
\begin{equation}\label{bdgoal}
    \mathop{\max}_{\substack{w\in [\epsilon N\pm \frac{N}{\sqrt{\log N}}]\cap\N }} \Big\{  \frac{1}{\binom{N}{w}}\sum_{j\notin B} \Pr_{v\sim\mu_{t}}\big[\textnormal{wt}(vH)=j\big] K_w(j)^2  \Big\}. \tag{\mbox{ $*$ }}
\end{equation}
Bounding this quantity by $N$ times its maximum value over $j$ and applying Theorem \ref{thmkrawtchoukbound}, we get
\begin{align*}
    (\ref{bdgoal})&\leq\frac{N}{\binom{N}{\lceil\epsilon N-\frac{N}{\sqrt{\log N}}\rceil}} \mathop{\max}_{\substack{w\in[\epsilon N\pm \frac{N}{\sqrt{\log N}}]\cap\N\\
    j\notin B}}\Big\{ \Pr_{v\sim\mu_{t}}\big[\textnormal{wt}(vH)=j\big]K_{w}(j)^2  \Big\}\\
    &\leq \frac{ N }{\binom{N}{\lceil\epsilon N-\frac{N}{\sqrt{\log N}}\rceil}}\max_{\substack{w\in [\epsilon N\pm \frac{N}{\sqrt{\log N}}]\cap \N \\j\notin B}}\left\{ \Pr_{v\sim \mu_t}\big[\textnormal{wt}(vH)=j\big]\cdot 2^{2h(\frac{w}{N})N+2w\log|1-\frac{2j}{N}|} \right\}.
\end{align*}
But by Lemma \ref{stirling} and subadditivity of entropy (i.e. Lemma \ref{entropysubadditive}), we know that \begin{align*}
\binom{N}{\lceil\epsilon N-\frac{N}{\sqrt{\log N}}\rceil}\geq \frac{1}{\sqrt{2N}}2^{h(\epsilon-\frac{1}{\sqrt{\log N}})N}\geq\frac{1}{\sqrt{2N}}2^{h(\epsilon)N-h(\frac{1}{\sqrt{\log N}})N}.
\end{align*}Additionally, for any $w\in\{\epsilon N\pm \frac{N}{\sqrt{\log N}}\}$ we have (again by subadditivity of entropy, i.e. Lemma \ref{entropysubadditive})
\begin{align*}
2h(\frac{w}{N})N\leq2h(\epsilon+\frac{1}{\sqrt{\log N}})N\leq 2h(\epsilon)N+2h(\frac{1}{\sqrt{\log N}})N.
\end{align*}

Finally, for any $w\in\{\epsilon N\pm \frac{N}{\sqrt{\log N}}\}$ and any $j\notin B$, we have $2w\log|1-\frac{2j}{N}|\leq 2\epsilon N \log|1-\frac{2j}{N}|-2\frac{N}{\sqrt{\log N}}\log|1-2\beta|\leq 2\epsilon N \log|1-\frac{2j}{N}|+h(\frac{1}{\sqrt{\log N}})N$, where the last inequality follows from our assumption that $N> 2^{\frac{1}{\epsilon^2(1-\epsilon)^2}+1} $. Overall, we then get
\begin{align*}
(\ref{bdgoal})&\leq \sqrt{2}N^{\frac{3}{2}}\cdot 2^{4h(\frac{1}{\sqrt{\log N}})N}\cdot2^{h(\epsilon)N}  \max_{ j\notin B}\left\{ \Pr_{v\sim \mu_t}\big[\textnormal{wt}(vH)=j\big]\cdot  2^{2\epsilon N\log|1-\frac{2j}{N}|} \right\} \\
&\leq \frac{1}{N}\cdot 2^{5h(\frac{1}{\sqrt{\log N}})N}\cdot2^{h(\epsilon)N}  \max_{ j\notin B}\left\{ \Pr_{v\sim \mu_t}\big[\textnormal{wt}(vH)=j\big]\cdot  2^{2\epsilon N\log|1-\frac{2j}{N}|} \right\},
\end{align*}
where the last line follows from our assumption that $N>2^{17}>50$ and the fact that for all $N>50$, we have $\log(\sqrt{2}N^{\frac{5}{2}})\leq 3\log N\leq \frac{N}{\sqrt{\log N}}\leq h(\frac{1}{\sqrt{\log N}})N.$
Combining this bound for the faraway terms with our bound (\ref{center1}) for the central terms of the summation, we bound the right-hand side of equation (\ref{startingpoint}) by
\begin{align*}
\Pr_{\rho\sim P_\epsilon}[\rho\notin D_k(H\rho^\intercal)]&\leq 2e^{-\frac{N}{4\epsilon\log N}}+\frac{N}{k^*} \max_{j\in B}\left\{ \Pr_{v\sim \mu_t}[\textnormal{wt}(vH)=j] \cdot \frac{2^N}{\binom{N}{j}}-1\right\} \nonumber\\
&+\frac{2^{h(\epsilon)N+5h(\frac{1}{\sqrt{\log N}})N}}{k^*}  \max_{ j\notin B}\left\{ \Pr_{v\sim \mu_t}\big[\textnormal{wt}(vH)=j\big]\cdot  2^{2\epsilon N\log|1-\frac{2j}{N}|} \right\}.
\end{align*}
\end{proof}

\section{The weight distribution of transitive linear codes}\label{weighttransitive}
We will now prove Theorem \ref{probtransitive}. We note that the bound we get is essentially tight, since for any finite field $\F_q$ and any integer divider $j$ of $N$, the repetition code $$C=\left\{(z,z,\dotsc,z)\in\F_q^N:z\in\F_q^{j}\right\}$$ is transitive, has dimension $j$, and has weight distribution
\begin{align*}
\Pr_{c\sim \mathcal{D}(C)}\Big[\textnormal{wt}(c) = \alpha N\Big] &=q^{-j} \cdot \binom{j}{(1-\alpha) j} (q-1)^{\alpha j}\\
&\geq q^{-j} \cdot \sqrt{\frac{1}{2j}}\cdot 2^{h(\alpha)j}\cdot q^{\alpha j \log_q (q-1)}\\
&= \sqrt{\frac{1}{2j}}\cdot q^{-(1-h_q(\alpha))j}
\end{align*}
for all $\alpha\in(0,1)$ such that $\alpha j\in\N$. We recall and prove our Theorem \ref{probtransitive} below:

\newtheorem*{probtransitive}{Theorem \ref{probtransitive}}
\begin{probtransitive}
Consider any finite field $\F_q$, and let $C\subseteq \mathbb{F}_q^N$ be any transitive linear code. Then for any $\alpha\in (0,1)$, we have
$$ \Pr_{c\sim \mathcal{D}(C)}\Big[\textnormal{wt}(c) = \alpha N\Big] \leq q^{-(1-h_q(\alpha)) \textnormal{dim }C},$$
where $\mathcal{D}(C)$ is the uniform distribution over all codewords in $C$, $\textnormal{wt}(c)$ is the number of non-zero coordinates of $c$, and $h_q$ is the q-ary entropy $$h_q(\alpha):= (1-\alpha) \log_q \frac{1}{1-\alpha} + \alpha \log_q\frac{q-1}{\alpha}.$$
\end{probtransitive}
\begin{proof}
Let $r=\textnormal{dim }C$, and let $M$ the $r\times N$ generator matrix of $C$. Without loss of generality, suppose that the first $r$ columns of $M$ span the column-space of $M$. Define
$$C^{(\alpha)}:=\{c\in C: \textnormal{wt}(c)=\alpha N\},$$
and let $Z=(Z_1,Z_2,\dotsc,Z_N)$ be a uniformly random codeword in $C^{(\alpha)}$. Now $C$ is transitive, so for every $j,k\in\{1,2,\dotsc,N\}$ the random variables $Z_j$ and $Z_k$ are identically distributed. By linearity of expectation and by definition of $C^{(\alpha)}$, we thus have that for every $j\in \{1,2,\dotsc,N\}$,
\begin{align}\label{prob0}
\Pr_{Z\sim \mathcal{D}(C^{(\alpha)})}[Z_j= 0]=1-\alpha.
\end{align}
Now for any nonzero $a,b\in\F_q$, there must be as many codewords $c\in C_\alpha$ with $c_j=a$ as there are codewords $c'\in C_\alpha$ with $c'_j=b$ (because $C$ is a linear subspace, so the mapping $c\mapsto ba^{-1}\cdot c$ maps codewords to codewords). The entropy of $Z_j$ can thus be expressed as
\begin{align}\label{eqentropy}
\mathop{\mathsf{H}}_{Z\sim \mathcal{D}(C^{(\alpha)})}(Z_j)&= (1-\alpha)\log\frac{1}{1-\alpha}+(q-1)\cdot\frac{\alpha}{q-1}\log\frac{q-1}{\alpha}\nonumber\\
&=h_q(\alpha)\log(q).
\end{align}
We will now show that $\mathsf{H}(Z_j |Z_1,Z_2,\dotsc,Z_{j-1})=0$ for every $j>r$. To this end, fix some $j>r$. Recall that the columns $\{M_1,M_2,\dotsc,M_r \}$ span the column-space of $M$, so we can write the column $M_j$ as $M_j=\sum_{k=1}^r \beta_k M_k$ for some $\beta_1,\beta_2\dotsc,\beta_r\in\F$. But any codeword $c\in C$ can be expressed as $v^{(c)}M$ for some $v^{(c)}\in\mathbb{F}
^r$, so any codeword $c\in C$ satisfies $$c_j=v^{(c)}M_j=\sum_{k=1}^r\beta_k v^{(c)}M_k=\sum_{k=1}^r\beta_k c_k.$$
The random variable $Z_j$ is thus determined by $\{Z_1,Z_2,\dotsc,Z_r \}$, and so we indeed have
$$\mathop{\mathsf{H}}_{Z\sim \mathcal{D}(C^{(\alpha)})}(Z_j|Z_1,Z_2,\dotsc,Z_{j-1})=0$$
for every $j>r$. Applying (\ref{eqentropy}) and the chain rule for entropy then gives
\begin{align*}
    \mathsf{H}(Z)&=\mathsf{H}(Z_1)+\sum_{i=2}^N\mathsf{H}(Z_i |Z_1,Z_2,\dotsc,Z_{i-1})    \\
    &\leq \sum_{i=1}^{r}\mathsf{H}(Z_i)  \\
    &\leq r \cdot h_q(\alpha)\log(q)
\end{align*}
Now $Z$ is sampled uniformly from $C^{(\alpha)}$, so $\mathsf{H}(Z)=\log \Big( |C^{(\alpha)}|\Big)$. We thus have
\begin{align*}
 \Pr_{c\sim \mathcal{D}(C)}\Big[\textnormal{wt}(c) = \alpha N\Big] &=\frac{\left|C^{(\alpha)}\right|}{q^r} \\
&=2^{\mathsf{H}(Z)}\cdot q^{-r}\\
&\leq q^{-(1-h_q(\alpha))\cdot r }.
\end{align*}
\end{proof}

\section{List decoding for transitive codes}\label{listtransitive}

We now turn to proving Theorem \ref{thmtransitivelist}. In section \ref{weightdecoding}, we bounded the minimum size for the decoding list of a linear code in terms of the weight distribution of its dual code. But as we stated in Claim \ref{dualtransitive}, the dual code of a transitive code is also transitive. For any transitive linear code $C$, we can thus apply our Theorem \ref{probtransitive} for the weight distribution of $C^\perp$ to get a bound on the size of the decoding list for $C$. We restate and prove our Theorem \ref{thmtransitivelist} below.

\newtheorem*{thmtransitivelist}{Theorem \ref{thmtransitivelist}}
\begin{thmtransitivelist}
Fix any $\epsilon\in(0,\frac{1}{2})$ and $\eta\in(0,1)$.
Then any  transitive linear code $C\subseteq\F_2^N$ of dimension $\textnormal{dim }C=\eta N$ can with high probability list-decode $\epsilon$-errors using a list $T$ of size
$$|T|=2^{\epsilon N\log(\frac{2}{1-\eta})+o(N)}+2^{4\epsilon N+o(N)}.$$
\end{thmtransitivelist}

\begin{proof}
We will show that for all $N> 2^{\frac{1}{\epsilon^2(1-\epsilon)^2}+1} $, there exists a function $T$ mapping every $x\in\F_2^N$ to a subset $T(x)\subseteq C$ of size
     $$|T(x)|=e^{\frac{N}{4\epsilon\log N}}\cdot 2^{5h(\frac{1}{\sqrt{\log N}})N}\cdot(2^{4\epsilon \eta N}+2^{\epsilon N\log(\frac{2}{1-\eta})}),$$
     with the property that for every codeword $c\in C$ we have
    \begin{align*}
    \Pr_{\rho\sim P_\epsilon}\big[c\notin T(c+\rho)\big]\leq4e^{-\frac{N}{4\epsilon\log N}}.
    \end{align*}
Let $H$ denote the parity-check matrix of $C$. By Lemma \ref{eqvltdecoder}, it is sufficient to show that for any list size $k>N$, we have
\begin{align}\label{eqvlttransitive}
    \Pr_{\rho\sim P_\epsilon}[\rho\notin  D_k(H\rho^\intercal )]&\leq 2e^{-\frac{N}{4\epsilon\log N}}+\frac{2^{5h(\frac{1}{\sqrt{\log N}})N+1}}{k} \cdot (2^{4\epsilon \eta N}+2^{\epsilon N\log(\frac{2}{1-\eta})}).
    \end{align}
Setting the list size $k=e^{\frac{N}{4\epsilon\log N}}\cdot 2^{5h(\frac{1}{\sqrt{\log N}})N}\cdot(2^{4\epsilon \eta N}+2^{\epsilon N\log(\frac{2}{1-\eta})})$ in equation (\ref{eqvlttransitive}) will then recover our theorem statement. We thus turn to proving (\ref{eqvlttransitive}). We note that $2\lfloor\frac{N}{\sqrt{\log N}}\rfloor+1<\frac{k}{2}$
, so there exists some $k^*\in[\frac{k}{2},k]$ satisfying the conditions of Proposition \ref{weightcriterion}. Proposition \ref{weightcriterion} then yields the following bound on the left-hand side of (\ref{eqvlttransitive}):
\begin{align}\label{thmbdtransitive}
\Pr_{\rho\sim P_\epsilon}[\rho\notin D_k(H\rho^\intercal)]&\leq 2e^{-\frac{N}{4\epsilon\log N}}+\frac{2N}{k} \max_{j\in B}\left\{ \Pr_{v\sim \mu_t}[\textnormal{wt}(vH)=j] \cdot \frac{2^N}{\binom{N}{j}}\right\} \nonumber\\
&+\frac{2^{h(\epsilon)N+5h(\frac{1}{\sqrt{\log N}})N+1}}{k}  \max_{ j\notin B}\left\{ \Pr_{v\sim \mu_t}\big[\textnormal{wt}(vH)=j\big] 2^{2\epsilon N\log|1-\frac{2j}{N}|} \right\},
\end{align}
where $\beta:=\frac{1}{2}\left( 1-2\sqrt{\tilde{\epsilon}(1-\tilde{\epsilon})} \right)$ for $\tilde{\epsilon}:=\epsilon+\frac{1}{\sqrt{\log N}}$, and
$B:=[\beta N,(1-\beta)N]\cap\N$. Our goal will be to bound both the central terms $j\in B$ and the faraway terms $j\notin B$ by using our bounds on the weight distribution of transitive codes. As we've seen in section \ref{prelim}, the dual code $C^\perp$ is a transitive linear code of dimension $N-\textnormal{dim }C $. By Theorem \ref{probtransitive}, we thus have that for all $j\in\{0,1,\dotsc,N\}$,
\begin{align}\label{weightbd}
    \Pr_{v\sim\mu_t}\big[\textnormal{wt}(vH)=j \big]\leq 2^{-(1-h(\frac{j}{N}))(1-\eta) N}.
\end{align}
For any $j\in B$, we then have by Lemma \ref{stirling} that
\begin{align*}
    \Pr_{v\sim\mu_t}\big[\textnormal{wt}(vH)=j \big]\cdot \frac{2^N}{\binom{N}{j}}  &\leq 2^{-(1-h(j/N))(1-\eta) N}\cdot \frac{2^N}{\sqrt{\frac{1}{2 N}}\cdot2^{h(j/N)N}}  \\
    &= \sqrt{2N}\cdot 2^{(1-h(j/N))\eta N}.
\end{align*}
But for $j\in B$ we have $\beta\leq \frac{j}{N}\leq 1-\beta$, so the right-hand side is maximized at $j=\lceil\beta N\rceil$. Applying Lemma \ref{pinsker}, we get
\begin{align}\label{centraltermstransitive}
    \max_{j\in B}\left\{\Pr_{v\sim\mu_t}\big[\textnormal{wt}(vH)=j \big]\cdot \frac{2^N}{\binom{N}{j}}\right\}  &\leq \sqrt{2N}\cdot 2^{(1-h(\beta))\eta N} \nonumber\\
    &\leq \sqrt{2N}\cdot 2^{4\tilde{\epsilon}(1-\tilde{\epsilon})\eta N}.
\end{align}
We now turn to the faraway terms of equation (\ref{thmbdtransitive}). By equation (\ref{weightbd}), we have
\begin{align*}
    \max_{ j\notin B}\left\{ \Pr_{v\sim \mu_t}[\textnormal{wt}(vH)=j]\cdot 2^{2\epsilon N\log|1-\frac{2j}{N}|} \right\}\leq\max_{ \delta< \beta}\left\{  2^{-(1-h(\delta))(1-\eta) N}\cdot 2^{2\epsilon N\log(1-2\delta)} \right\}.
\end{align*}
Note that by definition of $\beta$, any $\delta\in(0,\beta)$ can be written as $\delta=\frac{1-2\sqrt{\alpha\tilde{\epsilon}(1-\tilde{\epsilon})}}{2}$ for some $\alpha> 1$. By Lemma \ref{pinsker}, we can then rewrite our previous expression as
\begin{align*}
    \max_{ j\notin B}\left\{ \Pr_{v\sim \mu_t}[\textnormal{wt}(vH)=j] 2^{2\epsilon N\log|1-\frac{2j}{N}|} \right\}\leq\max_{ \alpha> 1}\left\{  2^{-\frac{2\alpha\tilde{\epsilon}(1-\tilde{\epsilon})}{\ln2}(1-\eta) N} 2^{\epsilon N\log(4\alpha\tilde{\epsilon}(1-\tilde{\epsilon}))} \right\}.
\end{align*}
But for any positive constant $c$, the derivative of $\log (\alpha) - c \alpha$ is $\frac{1}{\alpha\cdot\ln2} - c$, and the second derivative is always negative. Thus, the above expression achieves its maximum when $\alpha = \frac{\epsilon }{2\Tilde{\epsilon}(1-\Tilde{\epsilon})(1-\eta)}$. We then get
\begin{align}\label{farawayterms}
    \max_{ j\notin B}\left\{ \Pr_{v\sim \mu_t}[\textnormal{wt}(vH)=j]\cdot 2^{2\epsilon N\log|1-\frac{2j}{N}|} \right\}&\leq  2^{-\frac{\epsilon N}{\ln2}}\cdot 2^{\epsilon N\log(\frac{2\epsilon}{1-\eta})}\nonumber\\
    &\leq  2^{-h(\epsilon)N}\cdot
    2^{\epsilon N\log(\frac{2}{1-\eta})},
\end{align}
where in the last line we used the inequality $\log(1-x)\geq -\frac{x}{(1-x)\ln2}$ for $x<1$ to get $h(\epsilon)\leq -\epsilon\log(\epsilon)+\frac{\epsilon}{\ln2}$.
We now use equations (\ref{centraltermstransitive}) and (\ref{farawayterms}) to bound the central and faraway terms of (\ref{thmbdtransitive}) respectively. This gives
\begin{align*}
\Pr_{\rho\sim P_\epsilon}[\rho\notin D_k(H\rho^\intercal)]&\leq 2e^{-\frac{N}{4\epsilon\log N}}+\frac{2N}{k}\cdot \sqrt{2N}\cdot 2^{4\tilde{\epsilon}(1-\tilde{\epsilon})\eta N}\\
&\quad+\frac{2^{5h(\frac{1}{\sqrt{\log N}})N+1}}{k}  \cdot 2^{\epsilon N\log(\frac{2}{1-\eta})}\\
&\leq 2e^{-\frac{N}{4\epsilon\log N}}+\frac{2^{5h(\frac{1}{\sqrt{\log N}})N+1}}{k} \cdot (2^{4\epsilon \eta N}+2^{\epsilon N\log(\frac{2}{1-\eta})}).
\end{align*}
We have shown (\ref{eqvlttransitive}), and so we are done.
\end{proof}

\section{List decoding for doubly transitive codes}\label{listrm}
We will now turn to proving our list-decoding bounds for doubly transitive codes. We restate and prove our Theorem \ref{rmclose} below.

\newtheorem*{rmclose}{Theorem \ref{rmclose}}
\begin{rmclose}
Fix any $\epsilon\in(0,\frac{1}{2})$
and any $\gamma\leq 1-\log(1+2^{-4\epsilon})$ . Then any doubly  transitive linear code $C\subseteq\F_2^N$ of dimension $\textnormal{dim }C=(1-\gamma) N$ can with high probability list-decode $\epsilon$-errors using a list $T$ of size
$$|T|=2^{h(\epsilon)N-\gamma N+o(N)} .$$
\end{rmclose}

\begin{proof}
We will show that for all $N> 2^{\frac{1}{\epsilon^2(1-\epsilon)^2}+1} $, there exists a function $T$ mapping every $x\in\F_2^N$ to a subset $T(x)\subseteq  C$ of size
         $$|T(x)|=2^{h(\epsilon)N-\gamma N+o(N)},$$
     with the property that for every codeword $c\in  C$ we have
    \begin{align*}
    \Pr_{\rho\sim P_\epsilon}\big[c\notin T(c+\rho)\big]\leq
   3e^{-\frac{N}{4\epsilon\log N}}.
    \end{align*}
Let $H$ denote the parity-check matrix of $C.$ By Lemma \ref{eqvltdecoder}, it is sufficient to show that for any $N> 2^{\frac{1}{\epsilon^2(1-\epsilon)^2}+1} $ and any list size $k>N$, we have
\begin{align}\label{eqvltrm}
\Pr_{\rho\sim P_\epsilon}[\rho\notin D_k(H\rho^\intercal)]
&\leq 2e^{-\frac{N}{4\epsilon\log N}}+\frac{a}{k}\cdot
  2^{h(\epsilon)N-\gamma N}
\end{align}
for some $a=2^{o(N)}.$ Setting the list size $k=a\cdot e^{\frac{N}{4\epsilon\log N}}\cdot
  2^{h(\epsilon)N-\gamma N}$ in equation (\ref{eqvltrm}) will then recover our theorem statement. We thus turn to proving (\ref{eqvltrm}).  We note that $2\lfloor\frac{N}{\sqrt{\log N}}\rfloor+1<\frac{k}{2}$, so there exists some $k^*\in[\frac{k}{2},k]$ satisfying the conditions of Proposition \ref{weightcriterion}. Proposition \ref{weightcriterion} then yields the following bound on the left-hand side of (\ref{eqvltrm}).
\begin{align}\label{thmbdrm}
\Pr_{\rho\sim P_\epsilon}[\rho\notin D_k(H\rho^\intercal)]&\leq 2e^{-\frac{N}{4\epsilon\log N}}+\frac{2N}{k} \max_{j\in B}\left\{ \Pr_{v\sim \mu_t}[\textnormal{wt}(vH)=j] \cdot \frac{2^N}{\binom{N}{j}}\right\} \nonumber\\
&+\frac{2^{h(\epsilon)N+5h(\frac{1}{\sqrt{\log N}})N+1}}{k}  \max_{ j\notin B}\left\{ \Pr_{v\sim \mu_t}\big[\textnormal{wt}(vH)=j\big] 2^{2\epsilon N\log|1-\frac{2j}{N}|} \right\},
\end{align}
where $\beta:=\frac{1}{2}\left( 1-2\sqrt{\tilde{\epsilon}(1-\tilde{\epsilon})} \right)$ for $\tilde{\epsilon}:=\epsilon+\frac{1}{\sqrt{\log N}}$, and
$B:=[\beta N,(1-\beta )N]\cap\N$. 
Our goal will be to bound both the central terms $j\in B$ and the faraway terms $j\notin B$ by using Samorodnitsky's weight distribution bound for doubly transitive codes.
Now by Claim \ref{dual2transitive}, the dual code of a doubly transitive code is itself doubly transitive. Applying  Theorem \ref{previousboundsmall}, we thus get that for all $j\in\{0,1,\dotsc, N\}$,
\begin{align}\label{repeatsamorodnitsky}
    \Pr_{v\sim\mu_t}\big[\textnormal{wt}(vH)=j \big]\leq 2^{-\gamma N+o(N)}\cdot
\left(\frac{1}{2^{1-\gamma}-1}\right)^{\min\{j,N-j\}}.
\end{align}
It then follows that
\begin{align}\label{2transitivecenter}
    \max_{j\in B}\left\{ \Pr_{v\sim\mu_t}\big[\textnormal{wt}(vH)=j \big]\cdot \frac{2^N}{\binom{N}{j}}  \right\}&\leq \max_{\alpha\in[\beta,\frac{1}{2}] }\left\{  2^{-\gamma N-\alpha N \log(2^{1-\gamma}-1)+N-h(\alpha)N+o(N)} \right\}.
\end{align}
We want to bound the expression on the right-hand side by $2^{h(\epsilon)N-\gamma N+o(N)}$.
For this we define the function
\begin{align*}
    f(\alpha):=-\gamma N-\alpha N \log(2^{1-\gamma}-1)+N-h(\alpha)N
\end{align*}
and compute its derivative
\begin{align*}
    \frac{df}{d\alpha}=-N\log(2^{1-\gamma}-1)-N\log\frac{1-\alpha}{\alpha}.
\end{align*}
We note that over the interval $[0,1]$, the second derivative $\frac{d^2f}{d\alpha^2}=\frac{N}{\alpha(1-\alpha)\ln2}$ is positive.  Thus over $[0,1]$, the function $f$ is minimized at the point $\alpha^*$ satisfying $\frac{1-\alpha^*}{\alpha^*}=2^{1-\gamma}-1$ (i.e. $\alpha^*=1-2^{\gamma-1}$), and $f$ is monotone on either side of $\alpha^*$. In particular, over the interval $[\beta,\frac{1}{2}]$ the function $f$ must be maximized at either $\alpha=\beta$ or $\alpha=\frac{1}{2}$. But since $\gamma\leq 1-\log(1+2^{-4\epsilon})$ by our theorem assumption, we have
\begin{align}\label{bdhalf}
    f\big(\frac{1}{2}\big)&\leq-\gamma N+2\epsilon N\nonumber\\&\leq -\gamma N+h(\epsilon)N.
\end{align}

On the other hand we have $\beta=\frac{1-\sqrt{4\epsilon(1-\epsilon)}}{2}-o(1)$
, so in order to show that
\begin{align}\label{goaleqbeta}
    f(\beta)\leq h(\epsilon)N-\gamma N+o(N),
\end{align}
it suffices to show that
\begin{align*}
    -\frac{1-\sqrt{4\epsilon(1-\epsilon)}}{2} \log(2^{1-\gamma}-1)+1-h\Big(\frac{1-\sqrt{4\epsilon(1-\epsilon)}}{2} \Big)-h(\epsilon)\leq 0.
\end{align*}
But the left-hand is an increasing function of $\gamma$, so by our theorem assumption that $\gamma\leq 1-\log(1+2^{-4\epsilon})$, it suffices to show that
\begin{align}\label{technicalfcn}
    2\epsilon(1-\sqrt{4\epsilon(1-\epsilon)})+1-h\Big(\frac{1-\sqrt{4\epsilon(1-\epsilon)}}{2} \Big)-h(\epsilon)\leq 0.
\end{align}
We postpone the proof of this fact to Appendix \ref{atechnical}. Assuming this fact we get equation (\ref{goaleqbeta}), which when combined with (\ref{bdhalf}) and (\ref{2transitivecenter}) gives us
\begin{align}\label{centerterms}
    \max_{j\in B}\left\{ \Pr_{v\sim\mu_t}\big[\textnormal{wt}(vH)=j \big]\cdot \frac{2^N}{\binom{N}{j}}  \right\}&\leq 2^{h(\epsilon)N-\gamma N+o(N)}.
\end{align}
This finishes our analysis of the central terms of equation (\ref{thmbdrm}). For the faraway terms, by (\ref{repeatsamorodnitsky}) we have
\begin{align}\label{intermediatefarterm2}
&\max_{ j\notin B} \left\{\Pr_{v\sim \mu_t}[\textnormal{wt}(vH)=j]\cdot   2^{2\epsilon N\log|1-\frac{2j}{N}|} \right\}\nonumber\\
&\quad\quad\quad\quad\quad\quad\quad\quad\leq  \max_{ j\leq\frac{N}{2}} \left\{  2^{-\gamma N+o(N)} \left(\frac{1}{2^{1-\gamma}-1}\right)^{j}  \cdot 2^{2\epsilon N\log(1-\frac{2j}{N})} \right\}\nonumber\\
&\quad\quad\quad\quad\quad\quad\quad\quad=2^{-\gamma N+o(N)} \max_{ j\leq \frac{N}{2}} \left\{   2^{-j\log(2^{1-\gamma}-1)+2\epsilon N\log(1-\frac{2j}{N})} \right\}.
\end{align}
Now the function
$$g(j):=-j\log(2^{1-\gamma}-1)+2\epsilon N\log(1-\frac{2j}{N})$$
has first derivative
$$\frac{dg}{dj}=-\log(2^{1-\gamma}-1)-\frac{4\epsilon}{\ln2\cdot (1-\frac{2j}{N})},$$
and second derivative
$$\frac{dg^2}{d^2j}=-\frac{8\epsilon}{\ln2\cdot N (1-\frac{2j}{N})^2}<0.$$
Thus $g(j)$ achieves its maximum at $j^*=\frac{N}{2}+\frac{2\epsilon N}{\ln2\log(2^{1-\gamma}-1)}$ and is decreasing over $[j^*,\frac{N}{2}]$. Whenever $1-\gamma\geq\log(1+2^{-\frac{4\epsilon}{\ln2}})$, we have $j^*\leq0$; in that case the argument in equation (\ref{intermediatefarterm2}) is maximized at $j=0$ and we get
\begin{align*}
\max_{ j\notin B} \left\{\Pr_{v\sim \mu_t}[\textnormal{wt}(vH)=j]\cdot   2^{2\epsilon N\log|1-\frac{2j}{N}|} \right\}
&\leq 2^{-\gamma N+o(N)}.
\end{align*}
We now combine this bound for the faraway terms with the bound (\ref{centerterms}) for the central terms to bound the right-hand side of (\ref{thmbdrm}). We get that for all $N> 2^{\frac{1}{\epsilon^2(1-\epsilon)^2}+1} $, we have
\begin{align*}
\Pr_{\rho\sim P_\epsilon}[\rho\notin D_k(H\rho^\intercal)]
&\leq 2e^{-\frac{N}{4\epsilon\log N}}+\frac{2^{o(N)}}{k}\cdot
  2^{h(\epsilon)N-\gamma N}.
\end{align*}
We have shown (\ref{eqvltrm}), so we are done.
\end{proof}

\section*{Acknowledgments}
We thank Alexander Barg, Paul Beame, Noam Elkies, Jan Hazla, Amir Shpilka, Madhu Sudan and Amir Yehudayoff for useful discussions.

\appendix

\section{Weight bounds comparisons}\label{aweight}
In this section, we compare our Theorem \ref{probtransitive} with previously known bounds on the weight distribution of Reed-Muller codes. We will denote by $\mathsf{RM}_q(n,d)$ the  Reed-Muller code over $\F_q$ with $n$ variables and degree $d$. The codewords of $\mathsf{RM}_q(n,d)$ are the evaluation vectors (over all points in $\F_q^n$) of all multivariate polynomials of degree $\leq d$ in $n$ variables. Let $M_q(n, d)$ denote the set of monomials $m=\prod_{i=1}^n x_i^{d_i}$ satisfying
\begin{enumerate}
    \item $d_i<q$ for every $i\in\{1,2,\dotsc, n\}$
    \item $\sum_{i=1}^n d_i\leq d.$
\end{enumerate}
Then the dimension of the corresponding Reed-Muller code is
\begin{align}\label{dimensionrm}
    \textnormal{dim } \mathsf{RM}_q(n,d) = |M_q(n, d)|.
\end{align}

We note that when $q=2$, we have $$|M_2(n, d)|=\binom{n}{\leq d}.$$ Throughout this section, we will denote by $\mathcal{D}_q(n,d)$ the uniform distribution over all codewords in $\mathsf{RM}_q(n,d)$, and by $\textnormal{wt}(c)$ the number of non-zero coordinates of $c$. When $q=2$, we will simply write $\mathsf{RM}(n,d)$ and $\mathcal{D}(n,d)$ to mean $\mathsf{RM}_{2}(n,d)$ and $\mathcal{D}_{2}(n,d)$. The following result is an immediate consequence of our Theorem \ref{probtransitive}.

\begin{theorem}\label{rmupperprob}
Consider any finite field $\F_q$. For any non-negative integers $n,d\leq n,$ and any $\alpha\in (0,1)$, the Reed-Muller code $\mathsf{RM}_q(n,d)$ satisfies
 $$ \Pr_{c\sim\mathcal{D}_q(n,d)}\Big[\textnormal{wt}(c) = \alpha N\Big] \leq q^{-(1-h_q(\alpha)) |M_q(n,d)|}.$$
\end{theorem}
\begin{proof}
This follows immediately from Theorem \ref{probtransitive}, Fact \ref{transitive}, and equation (\ref{dimensionrm}).
\end{proof}

\hfill\\
\textbf{Reed-Muller codes over non-prime fields}
\hfill\\
To the best of our knowledge, our Theorem \ref{rmupperprob} is the first weight bound for Reed-Muller codes over non-prime fields.
\hfill\\
\textbf{Reed-Muller codes over odd prime fields}
\hfill\\
For Reed-Muller codes over odd prime fields, the only preexisting weight bound we are aware of is the following result of \cite{beame2020weightodd}:
\begin{theorem}[Corollary 1.2 in \cite{beame2020weightodd}]\label{oddprimes}
For any $0 < \delta < \frac{1}{2}$, there are constants $c_1, c_2>0$ such that for any
odd prime $q$ and for any integers $d,n$ such that $d \leq \delta n$, we have
$$\Pr_{c\sim \mathcal{D}_{q}(n,d)}\Big[\frac{\textnormal{wt}(c)}{N} \leq 1-\frac{1}{q}-q^{-c_1\frac{n}{d}}\Big] \leq q^{-c_2 |M_q(n,d)|}.$$
\end{theorem}
This was a generalization of \cite{ben-eliezer2012weighthalf1}, who proved the same result for Reed-Muller codes over $\F_2$. Theorem \ref{oddprimes} is very strong for small degrees, but gets weaker as the degree increases. When $d$ is linear in $n$ we have $q^{-c_1\frac{n}{d}}=\Theta(1)$, meaning that in this regime Theorem \ref{oddprimes} can only give a nontrivial bound on relative weights $\frac{\textnormal{wt}(c)}{N}$ that are at least a constant away from $1-\frac{1}{q}$. Our Theorem \ref{rmupperprob} gives nontrivial bounds for all relative weights and all degrees.

\hfill\\
\textbf{Reed-Muller codes over $\F_2$}
\hfill\\
We now turn to Reed-Muller codes over $\F_2$, for which more results are known. The same bound as Theorem \ref{oddprimes} was proven over $\F_2$ by \cite{ben-eliezer2012weighthalf1}. For comparison with our Theorem \ref{rmupperprob}, see the discussion above.

In the constant-rate regime (i.e. $d=\frac{n}{2}\pm O(\sqrt{n})$), the strongest known weight bound (for all weights) is due to Samorodnitsky. It follows immediately from Theorem \ref{previousboundsmall}, i.e. from Proposition 1.4 in \cite{samorodnitsky2020weightimproved}.

\begin{theorem}[follows from Proposition 1.4 in \cite{samorodnitsky2020weightimproved}]\label{samorodnitsky1}
 For any $\alpha\in(0,1)$, define $\alpha^{*}:=\min\{\alpha,1-\alpha\}.$  Then for any non-negative integers $n,d\leq n$ and any $\alpha\in (0,1)$, the Reed-Muller code $\mathsf{RM}(n,d)$ satisfies
$$\Pr_{c\sim\mathcal{D}(n,d)}\Big[\textnormal{wt}(c) = \alpha N\Big]\leq 2^{-\binom{n}{\leq d}+o(N)}
\Big(2^{1-\frac{\binom{n}{\leq d}}{N}}-1\Big)^{-\alpha^*N}.
$$
Moreover, if $\alpha^*\geq 1-2^{\frac{\binom{n}{\leq d}}{N}-1}$,
$$\Pr_{c\sim\mathcal{D}(n,d)}\Big[\textnormal{wt}(c) = \alpha N\Big]\leq 2^{o(N)}\cdot
\frac{\binom{N}{\alpha N}}{2^N}.
$$
\end{theorem}
When the rate of the code is subconstant (i.e. when the degree is away from $\frac{n}{2}$ ), Theorem \ref{samorodnitsky1} does not give strong bounds. An approach that has been fairly successful in this regime is the line of work of \cite{kaufman2012constantdegree,abbe2015RMlowrate,sberlo2020weightbound}.
To our knowledge, the strongest results for these regimes are due to \cite{sberlo2020weightbound}. We start with their bound for lower weights, i.e. for weights in $[0,\frac{N}{4}].$
\begin{theorem}[Theorem 1.1 in \cite{sberlo2020weightbound}]\label{ss1}
For any $j,n,d\in\N$ with $d\leq n$, we have
\begin{align*}
 \Pr_{c\sim\mathcal{D}(n,d)}[\textnormal{wt}(c)\leq 2^{-j}N]&\leq
2^{-\big(  1-17(\frac{j}{1-\frac{d}{n}}+\frac{2-\frac{d}{n}}{(1-\frac{d}{n})^2})(\frac{d}{n})^{j-1} \big)\binom{n}{\leq d} + O(n^4)}.
\end{align*}
\end{theorem}
We claim that for every $d> \frac{n}{34}$, there is some weight threshold $A_d<\frac{1}{4}$ for which our Theorem \ref{rmupperprob} is stronger than Theorem \ref{ss1} for all weights larger than $ A_d N$.
One way to see this is to note that our Theorem \ref{rmupperprob} satisfies
\begin{align*}
\Pr[\textnormal{wt}(c)\leq 2^{-j}\cdot 2^n]&\leq 2^{-\big( 1-h(2^{-j})  \big)\binom{n}{\leq d}}\\
&\leq 2^{-( 1-2j\cdot2^{-j})\binom{n}{\leq d}},
\end{align*}
while the expression in Theorem \ref{ss1} satisfies
$$2^{-\big(  1-17(\frac{j}{1-\frac{d}{n}}+\frac{2-\frac{d}{n}}{(1-\frac{d}{n})^2})(\frac{d}{n})^{j-1} \big)\binom{n}{\leq d} }\geq 2^{-\big( 1-17j(\frac{d}{n})^{j-1} \big)\binom{n}{\leq d} }.$$
Thus our Theorem \ref{rmupperprob} is stronger than Theorem \ref{ss1} whenever
\begin{align}\label{conditioncomparison}
    j\cdot2^{-(j-1)}<17j\cdot(\frac{d}{n})^{j-1}.
\end{align}
This condition is always satisfied when $d\geq \frac{n}{2}$, so in this range our Theorem \ref{rmupperprob} is stronger than Theorem \ref{ss1} for all weights. When $d<\frac{n}{2}$, condition (\ref{conditioncomparison}) is satisfied whenever
$$j<\frac{\log 17}{\log\frac{n}{2d}}+1.$$
For any $\frac{n}{34}<d<\frac{n}{2}$, this gives a nontrivial range.

This concludes our comparison of Theorem \ref{rmupperprob} with Theorem \ref{ss1}, which was the bound of \cite{sberlo2020weightbound} for weights in $[0,\frac{N}{4}].$ We now turn to their bounds for larger weights.
\begin{theorem}[Theorem 1.3 in \cite{sberlo2020weightbound}]\label{ss2}
Let $j,n\in\N$ and let $0<\gamma(n)<\frac{1}{2}-\Omega\left(  \sqrt{\frac{\log n}{n}}\right)$ be a parameter (which may be constant or depend on $n$) such that $\frac{j+\log\frac{1}{1-2\gamma}}{(1-2\gamma)^2}=o(n).$ Then
$$\Pr_{c\sim\mathcal{D}(n,\gamma n)}[\textnormal{wt}(c)\leq \frac{1-2^{-j}}{2}N]\leq 2^{-2^{-c(\gamma,j)}\binom{n}{\leq d}+O(n^4)},$$
where $c(\gamma,j)=O\left( \frac{\gamma^2 j+\gamma\log\frac{1}{1-2\gamma}}{1-2\gamma}+\gamma\right)$.
\end{theorem}
This bound holds when the degree is smaller than $\frac{n}{2}$. For arbitrary degree, \cite{sberlo2020weightbound} gives the following:
\begin{theorem}[Theorem 1.5 in \cite{sberlo2020weightbound}]\label{ss3}
For any $n,d\in\N$ with $d\leq n$ and any $\delta>0$, we have
\begin{align*}
 \Pr_{c\sim\mathcal{D}(n,d)}[\textnormal{wt}(c)\leq \frac{1-\delta}{2}N]&\leq e^{-\frac{\delta^2}{2}\cdot 2^d}.
\end{align*}
\end{theorem}
\hfill\\
We will start by comparing our Theorem \ref{rmupperprob} with Theorem \ref{ss3}. Applying Lemma \ref{pinsker}, we get from Theorem \ref{rmupperprob} that
\begin{align*}
    \Pr_{c\sim\mathcal{D}(n,d)}[\textnormal{wt}(c)\leq \frac{1-\delta}{2}N]&\leq 2^{-(1-h(\frac{1-\delta}{2}))\cdot\binom{n}{\leq d}}\\
    &\leq e^{-\frac{\delta^2}{2}\cdot \binom{n}{\leq d}}.
\end{align*}
Thus our Theorem \ref{rmupperprob} is strictly stronger than Theorem \ref{ss3} for all $d<n$. We will now compare our Theorem \ref{rmupperprob} with Theorem \ref{ss2}. Applying Lemma \ref{pinsker}, we get from Theorem \ref{rmupperprob} that
\begin{align*}
    \Pr_{c\sim\mathcal{D}(n,d)}[\textnormal{wt}(c)\leq \frac{1-2^{-j}}{2}N]&\leq 2^{-(1-h(\frac{1-2^{-j}}{2}))\cdot \binom{n}{\leq d}}\\
    &\leq 2^{-\frac{2^{-2j}}{2\ln2}\cdot \binom{n}{\leq d}}.
\end{align*}
It follows that our Theorem \ref{rmupperprob} is stronger than Theorem \ref{ss2} whenever $2^{-(2j+1)}\geq 2^{-c(\gamma,j)}$, i.e. whenever
\begin{align*}
    2j+1\leq c(\gamma,j).
\end{align*}
But $c(\gamma,j):=O\left( \frac{\gamma^2 }{1-2\gamma}\cdot j+\frac{\gamma\log\frac{1}{1-2\gamma}}{1-2\gamma}+\gamma\right)$, and
$\frac{\gamma^2 }{1-2\gamma}\rightarrow\infty$
as $\gamma\rightarrow 1/2$. Thus there exists some constant $\gamma^*\in (0,\frac{1}{2})$ such that our Theorem \ref{rmupperprob} is stronger than Theorem \ref{ss2} whenever $d>\gamma^* n$. In private correspondence with Amir Shpilka and Ori Sberlo, we learned that $\gamma^*$ can be computed to be $\gamma^*\approx 0.38$.

\section{Proof of corollary \ref{weightunique}}\label{abinomialweight}
Recall that for any $\epsilon\in(0,1)$ we defined $$A_\epsilon:=\{\alpha N\in \N:h(\alpha)>1-h(\epsilon)-N^{-1/5}\},$$
and that for any code $C$ we denote by $\mathcal{D}(C^\perp)$ the uniform distribution over the dual code $C^\perp$. We now restate and prove our Corollary \ref{weightunique}.
\newtheorem*{weightunique}{Corollary \ref{weightunique}}
\begin{weightunique}
Let $\epsilon\in(0,\frac{1}{2})$ be arbitrary, and let $C\subseteq\F_2^N$ be a linear code.
Suppose that for every $j\in A_\epsilon$ we have
\begin{align*}
    \Pr_{y\sim\mathcal{D}(C^\perp)}\big[\textnormal{wt}(y)=j\big]\leq \big( 1+o(N^{-1})  \big)\frac{\binom{N}{j}}{2^N},
\end{align*}
and suppose that
\begin{align*}
    \Pr_{y\sim\mathcal{D}(C^\perp)}\big[\textnormal{wt}(y)\notin A_\epsilon\big]\leq 2^{N^{\frac{3}{4}}}\cdot \frac{\sum_{i\notin A_\epsilon}\binom{N}{i}}{2^N}.
\end{align*}
Then $C$ is resilient to $\epsilon$-errors.
\end{weightunique}

\begin{proof}
From Theorem \ref{fouriercriterionunique}, we know that whenever $N>\frac{1}{\epsilon^4(\frac{1}{2}-\epsilon)^4} $, there exists some decoder $d:\F_2^N\rightarrow C$ such that for all $c\in C$,
\begin{align}\label{summation}
    \Pr_{\rho\sim P_\epsilon}[ d(c+\rho)\neq c]&\leq 2e^{-\frac{\sqrt{N}}{3\epsilon}}\nonumber\\
&\quad+N \mathop{\max}_{\substack{S\subseteq [\epsilon N\pm N^{3/4}]\cap\N \\
    1\leq|S|\leq 2}} \Big\{ \frac{1}{\binom{N}{S}}\sum_{j=0}^N \Pr_{y\sim C^\perp}\big[\textnormal{wt}(y)=j\big] K_S(j)^2-1 \Big\}.
\end{align}
Let $\nu\in(0,\frac{1}{2})$ be such that $ h(\nu)=1-h(\epsilon)-N^{-1/5}$, and note that we have
$$A_\epsilon=\{\lceil \nu N\rceil,\lceil \nu N\rceil+1,\dotsc,\lfloor(1-\nu)N\rfloor\}.$$
We will start by bounding the central terms $j\in A_\epsilon$ in equation (\ref{summation}). Applying Proposition \ref{IFourier} and the first condition in our theorem statement, we immediately get that for any $S\subseteq \{0,1,\dotsc,N\}$,
\begin{align}\label{center}
    \frac{1}{\binom{N}{S}}\sum_{j\in A_\epsilon} \Pr_{y\sim C^\perp}\big[\textnormal{wt}(y)=j\big] K_S(j)^2&\leq 1+o\big(\frac{1}{N}\big).
\end{align}
We now turn to the faraway terms $j\notin A_\epsilon$. For these, we note that for any non-negative integers $j,s\leq N$ we have
\begin{align*}
    |K_s(j)|&=\left|\sum_{t=0}^s(-1)^t\binom{j}{t}\binom{N-j}{s-t}\right| \\
    &\leq\sum_{t=0}^s\binom{j}{t}\binom{N-j}{s-t} \\ &=\binom{N}{s},
\end{align*}
where we used the convention that $\binom{a}{b}=0$ when $a<b$. For any $S\subseteq \{0,1,\dotsc,N\}$, we can then bound the faraway terms $j\notin A_\epsilon$ of equation (\ref{summation}) by
\begin{align*}
\frac{1}{\binom{N}{S}}\sum_{j\notin A_\epsilon} \Pr_{y\sim C^\perp}\big[\textnormal{wt}(y)=j\big] K_S(j)^2&\leq\binom{N}{S}\Pr_{y\sim C^\perp}\big[ \textnormal{wt}(y)\notin A_\epsilon\big].
\end{align*}
Applying the second condition in our theorem statement in combination with Lemma \ref{stirling} and the subadditivity of entropy (Lemma \ref{entropysubadditive}), we get
\begin{align*}
&\mathop{\max}_{\substack{S\subseteq [\epsilon N\pm N^{3/4}]\cap\N \\
    1\leq|S|\leq 2}} \Big\{ \frac{1}{\binom{N}{S}}\sum_{j\notin A_\epsilon} \Pr_{y\sim C^\perp}\big[\textnormal{wt}(y)=j\big] K_S(j)^2\Big\}\\
&\quad\quad\quad\quad\quad\quad\quad\quad\leq 2\binom{N}{\lfloor\epsilon N+N^{3/4}\rfloor}\cdot 2\cdot2^{-h(\epsilon)N-N^{4/5}+N^{3/4}}\nonumber\\
&\quad\quad\quad\quad\quad\quad\quad\quad\leq 4\cdot 2^{h(\epsilon)N+h(N^{-1/4})N}\cdot 2^{-h(\epsilon)N-N^{4/5}+N^{3/4}}\nonumber\\
&\quad\quad\quad\quad\quad\quad\quad\quad\leq o(\frac{1}{N}).
\end{align*}
Combining this bound for the faraway terms with our bound (\ref{center}) for the central terms, we bound equation (\ref{summation}) by
\begin{align*}
\Pr_{\rho\sim P_\epsilon}[ d(c+\rho)\neq c)]&\leq    2e^{-\frac{\sqrt{N}}{3\epsilon}}+N\cdot o\big(\frac{1}{N}\big)\\
&\leq o(1).
\end{align*}

\end{proof}

\section{Lower bounds on list decoding}\label{alistcapacity}
In this section, we prove the result mentioned in equation (\ref{lwboundlist}), section \ref{intro}.
\begin{claim}
Let $\epsilon\in(0,\frac{1}{2})$ be arbitrary, and consider any $N>\frac{100}{\epsilon^2}$. Suppose a code $C\subseteq\F_2^N$ and a decoder $d_k:\F_2^N\rightarrow C^{\otimes k}$ satisfy
$$\mathop{\Pr}_{\substack{\rho\sim P_\epsilon\\c\sim \mathcal{D(C)}}}[c\in d_k(c+\rho)]\geq \frac{3}{4},$$
for $P_\epsilon$ the $\epsilon$-noisy distribution and $\mathcal{D(C)}$ the uniform distribution on $C$. Then we must have $$k\geq |C|\cdot 2^{-(1-h(\epsilon))N}\cdot \frac{2^{-h(\epsilon)N^{3/4}}}{8}.$$
\end{claim}
\begin{proof}
We will first show that in order for the decoder $d_k$ to succeed with high probability, there must be many codewords $c\in C$ for which $$|\{ x\in\F_2^N:c\in d_k(x) \}|\gtrsim 2^{h(\epsilon)N}.$$
Intuitively, this is because the sphere of radius $\epsilon N$ around any codeword $c$ contains $\approx2^{h(\epsilon)N}$ points (and for any transmitted codeword $c$, with high probability the received message $m$ will satisfy $\textnormal{wt}(m+c)\approx \epsilon N$). We will then simply double-count the number of pairs $(x, c)$ for which $c\in d_k(x)$. On the one hand, there are $2^N\cdot k$ such pairs, since every received message is mapped to $k$ codewords; on the other hand, there must be at least about $|C|\cdot 2^{h(\epsilon)N}$ pairs, since as we've just argued most codewords in $C$ need to be matched to at least $\approx 2^{h(\epsilon)N}$ points. It follows that we must have
$$k\gtrsim |C|\cdot \frac{2^{h(\epsilon)N}}{2^N}.$$

Formally, we first note that the theorem condition implies that at least $\frac{|C|}{2}$ codewords $c\in C$ must satisfy
\begin{align}\label{rewritecondition}
    \Pr_{\rho\sim P_\epsilon}[c\in d_k(c+\rho)]\geq \frac{1}{2}.
\end{align}
Fix any such $c$. Now from Chernoff's bound (i.e Lemma \ref{chernoff}), we have for $N>\frac{100}{\epsilon^2}$ that
\begin{align*}
    \Pr_{\rho\sim P_\epsilon}\big[\textnormal{wt}(\rho)\leq \epsilon N- \epsilon N^{3/4}\big]&\leq 2e^{-\frac{10}{3}}\\
    &\leq\frac{1}{4}.
\end{align*}
In order for $c$ to satisfy $c\in d_k(c+\rho)$ with probability at least $\frac{1}{2}$, there must then be a subset $S_c\subseteq\{x\in\F_2^N:\textnormal{wt}(c+x)\geq\epsilon N-\epsilon N^{3/4}\} $ satisfying both
\begin{align}\label{eq1}
    x\in S_c\implies c\in d_k(x)
\end{align}
and
\begin{align}\label{eq2}
    \Pr_{\rho\sim P_\epsilon}\big[\rho\in S_c\big]\geq \frac{1}{4}.
\end{align}
But every element $x\in S_c$ satisfies $\textnormal{wt}(c+x)\geq \epsilon N- \epsilon N^{3/4}$, so every $x\in S_c$ satisfies
\begin{align}\label{eq3}
\Pr_{\rho\sim P_\epsilon}\big[\rho=c+x\big]&\leq \epsilon^{\epsilon N-\epsilon N^{3/4}}(1-\epsilon)^{(1-\epsilon)N+\epsilon N^{3/4}}\nonumber\\
&\leq2^{-(1-N^{-1/4})h(\epsilon)N}
\end{align}
Equations (\ref{eq2}) and (\ref{eq3}) imply that any $c\in C$ that can be list-decoded by $d_k$ with probability $\geq \frac{1}{2}$ must satisfy $|S_c|\geq \frac{2^{(1-N^{-1/4})h(\epsilon)N}}{4}$. It then follows from (\ref{eq1}) that any such $c$ must satisfy
$$\big|\{x\in\F_2^N:c\in d_k(x)\}   \big|\geq \frac{2^{(1-N^{-1/4})h(\epsilon)N}}{4}.$$
By double counting, we get
\begin{align*}
    2^N\cdot k&=\sum_{c\in C}\big|\{x\in\F_2^N:c\in d_k(x)\}   \big|\\
    &\geq \frac{|C|}{2}\cdot \frac{2^{(1-N^{-1/4})h(\epsilon)N}}{4}\\
    &=\frac{|C|}{8}\cdot 2^{h(\epsilon)N-h(\epsilon)N^{3/4}}.
\end{align*}
The result then follows from rearranging terms.
\end{proof}

\section{Other proofs for sections \ref{intro}, \ref{prelim} and \ref{outline}}

\subsection{Explicit bounds from Theorem \ref{thmtransitivelist}}\label{calculationsthm}
In this section, we prove the result we mentioned in equation (\ref{egtransitive}).
\begin{claim}
Fix any $\epsilon\in(0,\frac{1}{2})$ and $N> 2^{\frac{1}{\epsilon^2(1-\epsilon)^2}+1} $. Then any  transitive linear code $C\subseteq\F_2^N$ of dimension $\textnormal{dim }C=(1-\frac{4\epsilon}{e})N$
can with high probability list-decode $\epsilon$-errors using a list $T$ of size
\begin{align*}
|T|= 2^{(h(\epsilon) -\epsilon +\frac{\epsilon^2}{\ln2} )N+o(N)}+2^{4\epsilon N+o(N)}.
\end{align*}
\end{claim}
\begin{proof}
From Theorem \ref{thmtransitivelist}, we know that $C$ can with high probability list-decode $\epsilon$-errors using a list $T$ of size
\begin{align*}
|T|=&2^{\epsilon N\log(\frac{2e}{4\epsilon})+o(N)}+2^{4\epsilon N+o(N)}\\
&=2^{\epsilon N\log(\frac{1}{\epsilon})+\epsilon N\log e-\epsilon N+o(N)}+2^{4\epsilon N+o(N)}\\
&=2^{\epsilon N\log(\frac{1}{\epsilon})+(1-\epsilon)N\frac{\epsilon}{\ln2}-\epsilon N+\frac{\epsilon^2}{\ln2}N+o(N)}+2^{4\epsilon N+o(N)}\\
&\leq 2^{(h(\epsilon) -\epsilon +\frac{\epsilon^2}{\ln2} )N+o(N)}+2^{4\epsilon N+o(N)},
\end{align*}
where in the last line we used the inequality $\log(1-x)\leq -\frac{x}{\ln2}$ for all $x$ to get $h(\epsilon)\geq \epsilon\log\frac{1}{\epsilon}+(1-\epsilon)\frac{\epsilon}{\ln2}$.
\end{proof}

\subsection{Duals of transitive codes - proof of claims \ref{dualtransitive} and \ref{dual2transitive}}\label{adualtransitive}
We show that the dual of a transitive code is itself transitive.
\newtheorem*{dualtransitive}{Claim \ref{dualtransitive}}
\begin{dualtransitive}
The dual code $C^\perp$ of a transitive code $C\subseteq\F_2^N$ is transitive.
\end{dualtransitive}

\begin{proof}
Let $i,j\in [N]$ be arbitrary. Since $C$ is transitive, we know there exists a permutation $\pi :[N]\rightarrow [N]$ such that $\pi(j)=i$ and for any $c=(c_1,c_2,\dotsc,c_N)\in C$, we have $c_{\pi}:=(c_{\pi(1)},c_{\pi(2)},\dotsc,c_{\pi(N)}) \in C$ . Clearly $\pi^{-1}$ satisfies $\pi^{-1}(i)=j$, and we claim that it also satisfies that $v_{\pi^{-1}} \in C^\perp$ for all $v\in C^\perp$. For this we note that since $c_\pi\in C$ for every $c\in C$, we have by definition that every $v\in C^\perp$ satisfies
$$\sum_k v_k c_{\pi(k)} = 0 \textnormal{ for all }c\in C.$$
We thus have
\begin{align*}
    v\in C^\perp &\implies \sum_k v_k c_{\pi(k)} = 0 \textnormal{ for all }c\in C\\
    &\implies \sum_k v_{\pi^{-1}(k)} c_k = 0 \textnormal{ for all }c\in C \\
    &\implies v_{\pi^{-1}}\in C^\perp.
\end{align*}
\end{proof}

\newtheorem*{dual2transitive}{Claim \ref{dual2transitive}}
\begin{dual2transitive}
The dual code $C^\perp$ of a doubly transitive code $C\subseteq\F_2^N$ is doubly transitive.
\end{dual2transitive}

\begin{proof}
Let $i,j,k,l\in [N]$ be such that $i\neq k$ and $j\neq l.$ Since $C$ is doubly transitive, we know there exists a permutation $\pi :[N]\rightarrow [N]$ such that $\pi(j)=i$, $\pi(l)=k$,  and for any $c=(c_1,c_2,\dotsc,c_N)\in C$, we have $c_{\pi}:=(c_{\pi(1)},c_{\pi(2)},\dotsc,c_{\pi(N)}) \in C$ . Clearly $\pi^{-1}$ satisfies $\pi^{-1}(i)=j$ and $\pi^{-1}(k)=l$, and we claim that it also satisfies that $v_{\pi^{-1}} \in C^\perp$ for all $v\in C^\perp$. For this we note that since $c_\pi\in C$ for every $c\in C$, we have by definition that every $v\in C^\perp$ satisfies
$$\sum_{t=1}^N v_t c_{\pi(t)} = 0 \textnormal{ for all }c\in C.$$
We thus have
\begin{align*}
    v\in C^\perp &\implies \sum_t v_t c_{\pi(t)} = 0 \textnormal{ for all }c\in C\\
    &\implies \sum_t v_{\pi^{-1}(t)} c_t = 0 \textnormal{ for all }c\in C \\
    &\implies v_{\pi^{-1}}\in C^\perp.
\end{align*}
\end{proof}

\subsection{On known list-decoding bounds for doubly transitive codes}\label{acomparelist}
We recall the known list-decoding bound for doubly transitive codes (see equation (\ref{previouslistresult}) in section \ref{intro}):
\begin{align*}
    |T|=2^{\epsilon N\log\frac{4\epsilon(1-\epsilon)}{(2^\gamma-1)^2}+o(N)},
\end{align*}
where $1-\gamma\in(0,1)$ is the rate of the code. We claim that for constant $\gamma$, this bound never achieves the information-theoretic $2^{h(\epsilon)N-\gamma N+o(N)}$.
\begin{claim}
For any $\epsilon\in(0,\frac{1}{2})$ and any $\gamma\in (0,1)$, we have
\begin{align*}
    \epsilon \log\frac{4\epsilon(1-\epsilon)}{(2^\gamma-1)^2}>h(\epsilon)-\gamma.
\end{align*}
\end{claim}
\begin{proof}
Since $2^x< 1+x$ for all $x\in(0,1)$, it will be sufficient to show that
\begin{align*}
    \epsilon \log\frac{4\epsilon(1-\epsilon)}{\gamma^2}\geq h(\epsilon)-\gamma.
\end{align*}
We will thus show that for any $\epsilon\in(0,\frac{1}{2})$ and any $c=\frac{\gamma}{\epsilon}<\frac{1}{\epsilon}$, we have
\begin{align*}
    \epsilon \log\frac{4\epsilon(1-\epsilon)}{(c\epsilon)^{2}}\geq h(\epsilon)-c\epsilon,
\end{align*}
i.e. that
\begin{align}\label{goallistbnd}
    f(\epsilon,c):=\log(1-\epsilon)+2\epsilon-2\epsilon\log c +c\epsilon\geq0.
\end{align}
We first fix some $\epsilon\in(0,\frac{1}{2})$ and compute the $c$ minimizing $f(\epsilon,c)$. Note that
\begin{align*}
    \frac{\partial}{\partial c} f(\epsilon,c)&=-\frac{2\epsilon}{c\ln2} +\epsilon
\end{align*}
and
\begin{align*}
    \frac{\partial^2}{\partial c^2} f(\epsilon,c)&=\frac{2\epsilon}{c^2\ln2}>0,
\end{align*}
so $f(\epsilon,c)$ is minimized at $c=\frac{2}{\ln2}$ and decreasing over $c\in[0,\frac{2}{\ln2}]$. We thus have
\begin{align}\label{2casesforc}
    \min_{c\leq\frac{1}{\epsilon}} f(\epsilon,c)=\begin{cases}
f(\epsilon,\frac{2}{\ln2}) & \text{if $\epsilon\leq\frac{\ln2}{2}$,}\\
f(\epsilon,\frac{1}{\epsilon}) & \text{otherwise.}
\end{cases}
\end{align}
We deal with each case separately. For the case $\epsilon\leq\frac{\ln2}{2}$, we want to show that
\begin{align*}
    f(\epsilon,\frac{2}{\ln2})=\log(1-\epsilon)+2\epsilon\log (\ln2) +\frac{2\epsilon}{\ln2}\geq 0.
\end{align*}
The first derivative is
\begin{align*}
    \frac{\partial}{\partial \epsilon}f(\epsilon,\frac{2}{\ln2})=-\frac{1}{(1-\epsilon)\ln2}+2\log(\ln2)+\frac{2}{\ln2},
\end{align*}
and the second derivative is
\begin{align*}
    \frac{\partial^2}{\partial \epsilon^2}f(\epsilon,\frac{2}{\ln2})=-\frac{1}{(1-\epsilon)^2\ln2}<0.
\end{align*}
Thus the function $f(\epsilon,\frac{2}{\ln2})$ is maximized at $\epsilon^*=1-\frac{1}{(2\log(\ln2)+\frac{2}{\ln2})\ln2}\approx 0.21,$ and monotone on each side of $\epsilon^*$. In particular, we know that over the interval $[0,\frac{\ln2}{2}]$ the function $f(\epsilon,\frac{2}{\ln2})$ achieves its minimum at either $\epsilon=0$ or $\epsilon=\frac{\ln2}{2}$. But $f(0,\frac{2}{\ln2})=0<f(\frac{\ln2}{2},\frac{2}{\ln2})$, so we indeed have that
$$f(\epsilon,\frac{2}{\ln2})\geq 0$$
for all $0\leq\epsilon\leq\frac{\ln2}{2}$. This deals with the first case of (\ref{2casesforc}). For the second case of (\ref{2casesforc}), we want to show that for all $\epsilon\in(0,\frac{1}{2})$ we have
$$f(\epsilon,\frac{1}{\epsilon})=\log(1-\epsilon)+2\epsilon+2\epsilon\log\epsilon+1\geq0.$$
But
\begin{align*}
\frac{\partial }{\partial \epsilon}f(\epsilon,\frac{1}{\epsilon})&=-\frac{1}{(1-\epsilon)\ln2}+2-2\log(\frac{1}{\epsilon})+\frac{2}{\ln2}
\end{align*}
is maximized at $\epsilon=\frac{1}{2}$, since
$\frac{\partial^2 }{\partial \epsilon^2}f(\epsilon,\frac{1}{\epsilon})=\frac{1}{\ln2}(\frac{2}{\epsilon}-\frac{1}{(1-\epsilon)^2})$ and $2(1-\epsilon)^2\geq \frac{1}{2}\geq \epsilon$ for $\epsilon\in(0,\frac{1}{2})$. It then follows that for $\epsilon\in(0,\frac{1}{2})$, we have
\begin{align*}
\frac{\partial }{\partial \epsilon}f(\epsilon,\frac{1}{\epsilon})&\leq -\frac{1}{(1-\frac{1}{2})\ln2}+2-2\log(2)+\frac{2}{\ln2}\\
&=0,
\end{align*}
and so the function $f(\epsilon,\frac{1}{\epsilon})$ is decreasing in $\epsilon$. Since $f(\frac{1}{2},2)=0$, we indeed have $f(\epsilon,\frac{1}{\epsilon})\geq 0$ for all $\epsilon\in(0,\frac{1}{2})$.
\end{proof}

\subsection{A version of Pinsker's inequality - Proof of Lemma \ref{pinsker}}\label{apinsker}

\newtheorem*{pinsker}{Lemma \ref{pinsker}}
\begin{pinsker}
For any $\mu\in(0,1)$, we have
$$1-h\left(\frac{1-\mu}{2}\right)=\frac{1}{2\ln2}\sum_{i=1}^\infty \frac{\mu^{2i}}{i(2i-1)},$$
and thus
$$\frac{\mu^2}{2\ln2}\leq1-h\left(\frac{1-\mu}{2}\right)\leq \mu^2.$$
\end{pinsker}

\begin{proof}
\begin{align*}
    1-h(\frac{1-\mu}{2})&=1+\frac{1-\mu}{2}\log\left( \frac{1-\mu}{2} \right)+\frac{1+\mu}{2}\log\left( \frac{1+\mu}{2} \right)\\
    &=\frac{1-\mu}{2}\log\left( 1-\mu\right)+\frac{1+\mu}{2}\log\left( 1+\mu\right)\\
    &=\frac{1}{2\ln2}\left[ -(1-\mu)\sum_{i=1}^\infty\frac{\mu^i}{i}-(1+\mu)\sum_{i=1}^\infty(-1)^i\frac{\mu^i}{i}  \right]\\
    &=\frac{1}{2\ln2}\left[ 2\mu\sum_{i=1}^\infty\frac{\mu^{2i-1}}{2i-1}-2\sum_{i=1}^\infty\frac{\mu^{2i}}{2i}  \right]\\
    &=\frac{1}{\ln2}\sum_{i=1}^\infty\mu^{2i}\left( \frac{1}{2i-1}-\frac{1}{2i}  \right)\\
    &=\frac{1}{2\ln2}\sum_{i=1}^\infty \frac{\mu^{2i}}{i(2i-1)}
\end{align*}
Thus $1-h(\frac{1-\mu}{2})\geq\frac{\mu^2}{2\ln2}$ and $1-h(\frac{1-\mu}{2})\leq \frac{1}{2\ln2} \sum_{i=1}^\infty \frac{\mu^2}{i(2i-1)}=\frac{1}{2\ln2}\cdot2\ln2\cdot \mu^2=\mu^2$.
\end{proof}

\subsection{Proof of (\ref{technicalfcn})}\label{atechnical}
\begin{claim}
    For any $\epsilon\in[0,\frac{1}{2}]$, we have
    \begin{align*}
        h\Big(\frac{1-\sqrt{4\epsilon(1-\epsilon)}}{2} \Big)+h(\epsilon)\geq 1+2\epsilon(1-\sqrt{4\epsilon(1-\epsilon)}).
    \end{align*}

\end{claim}
\begin{proof}
Writing the Taylor expansion of
$h$ as in the proof of Lemma \ref{pinsker}, we have
\begin{align*}
    h\Big(\frac{1-\sqrt{4\epsilon(1-\epsilon)}}{2} \Big)+h(\epsilon)&=2-\frac{1}{2\ln2}\sum_{i=1}^\infty \frac{(4\epsilon(1-\epsilon))^{i}+(1-2\epsilon)^{2i}}{i(2i-1)}.
\end{align*}
But $\sum_{i=1}^\infty\frac{1}{i(2i-1)}=2\ln2$, so our previous expression can be rewritten as
\begin{align*}
    h\Big(\frac{1-\sqrt{4\epsilon(1-\epsilon)}}{2} \Big)+h(\epsilon)&=1+\frac{1}{2\ln2}\sum_{i=1}^\infty \frac{1-(4\epsilon(1-\epsilon))^{i}-(1-4\epsilon(1-\epsilon))^{i}}{i(2i-1)}\\
    &\geq 1+\frac{1}{2\ln2}\sum_{i=2}^\infty \frac{1-(4\epsilon(1-\epsilon))^{2}-(1-4\epsilon(1-\epsilon))^{2}}{i(2i-1)},
\end{align*}
where in the second line we used the fact that the term $i=1$ in the summation is $0.$
We will now need the following inequality:
\begin{align}\label{goalineqln2}
    1-(4\epsilon(1-\epsilon))^{2}-(1-4\epsilon(1-\epsilon))^{2}\geq \frac{4\ln2\cdot\epsilon(1-\sqrt{4\epsilon(1-\epsilon)})}{2\ln2-1}.
\end{align}
Once we establish (\ref{goalineqln2}), our claim follows from bounding our previous inequality by
\begin{align*}
    h\Big(\frac{1-\sqrt{4\epsilon(1-\epsilon)}}{2} \Big)+h(\epsilon)
    &\geq 1+\frac{1}{2\ln2}\cdot\frac{4\ln2\cdot\epsilon(1-\sqrt{4\epsilon(1-\epsilon)})}{2\ln2-1}\sum_{i=2}^\infty \frac{1}{i(2i-1)}\\
    &=1+\frac{2\epsilon(1-\sqrt{4\epsilon(1-\epsilon)})}{2\ln2-1}\Big(\sum_{i=1}^\infty \frac{1}{i(2i-1)}-1\Big)\\
    &=1+2\epsilon(1-\sqrt{4\epsilon(1-\epsilon)})
\end{align*}
It thus only remains to prove (\ref{goalineqln2}). For this, we note that the right-hand side of (\ref{goalineqln2}) can be bounded by
\begin{align*}
    \frac{4\ln2\cdot\epsilon(1-\sqrt{4\epsilon(1-\epsilon)})}{2\ln2-1}\leq 8\epsilon(1-\sqrt{4\epsilon(1-\epsilon})),
\end{align*}
while the left-hand side of (\ref{goalineqln2}) expands to
\begin{align*}
    1-(4\epsilon(1-\epsilon))^{2}-(1-4\epsilon(1-\epsilon))^{2}=8\epsilon-40\epsilon^2+64\epsilon^3-32\epsilon^4.
\end{align*}
Thus it is sufficient to show that
\begin{align*}
    5\epsilon-8\epsilon^2+4\epsilon^3\leq \sqrt{4\epsilon(1-\epsilon)},
\end{align*}
or equivalently (squaring both sides and dividing by $\epsilon$) that the function
\begin{align*}
    g(\epsilon):=16\epsilon^5-64\epsilon^4+104\epsilon^3-80\epsilon^2+29\epsilon-4
\end{align*}
satisfies
\begin{align}\label{goalfunctiong}
    g(\epsilon)\leq 0
\end{align}
for all $\epsilon\in[0,\frac{1}{2}].$ But the derivative of $g$ is
\begin{align*}
    \frac{dg}{d\epsilon}&=80\epsilon^4-256\epsilon^3+312\epsilon^2-160\epsilon+29\\
    &=(1-2\epsilon)^2(20\epsilon^2-44\epsilon+29),
\end{align*}
and the polynomial $20\epsilon^2-44\epsilon+29$ has the two complex roots $\frac{11\pm 2\sqrt{6}\cdot i}{10}$. Thus over the interval $[0,\frac{1}{2}]$, the function $g(\epsilon)$ must be maximized at either $\epsilon=0$ or $\epsilon=\frac{1}{2}$. Since $g(0)=-4$ and $g(\frac{1}{2})=0$, we have $$g(\epsilon)\leq 0$$
for all $\epsilon\in[0,\frac{1}{2}].$ We have thus shown (\ref{goalfunctiong}), and we are done.
\end{proof}


\begin{thebibliography}{99}

\bibitem[AHN21]{abbe2020almostRM} (MR4346075) [10.1109/TIT.2021.3116663]
\newblock E. Abbe, J. Hazla and I. Nachum,
\newblock \doititle{Almost-reed-muller codes achieve constant rates for random errors},
\newblock {\em {IEEE} Trans. Inf. Theory}, \textbf{67} (2021), 8034-8050.

\bibitem[AS23]{abbe2023rmcapacityBSC} [10.1109/FOCS57990.2023.00020]
\newblock E. Abbe and C. Sandon,
\newblock \doititle{A proof that reed-muller codes achieve shannon capacity on symmetric channels},
\newblock In {\em 64th {IEEE} Annual Symposium on Foundations of Computer Science, {FOCS} 2023, Santa Cruz, CA, USA, November 6-9, 2023}, (2023), 177-193.

\bibitem[ASSY23]{abbe2023survey2}
\newblock E. Abbe, O. Sberlo, A. Shpilka and M. Ye,
\newblock Reed-muller codes,
\newblock {\em Foundations and Trends in Communications and Information Theory}, \textbf{20} (2023), 1-156.

\bibitem[ASW15]{abbe2015RMlowrate} (MR3400278) [10.1109/TIT.2015.2462817]
\newblock E. Abbe, A. Shpilka and A. Wigderson,
\newblock \doititle{Reed-muller codes for random erasures and errors},
\newblock {\em {IEEE} Trans. Inf. Theory}, \textbf{61} (2015), 5229-5252.

\bibitem[ASY21]{abbe2021survey} (MR4289318) [10.1109/TIT.2020.3004749]
\newblock E. Abbe, A. Shpilka and M. Ye,
\newblock \doititle{Reed-muller codes: Theory and algorithms},
\newblock {\em {IEEE} Trans. Inf. Theory}, \textbf{67} (2021), 3251-3277.

\bibitem[AY19]{abbe2019rmpolarize} (MR4228173) [10.1109/FOCS.2019.00026]
\newblock E. Abbe and M. Ye,
\newblock \doititle{Reed-muller codes polarize},
\newblock {\em 2019 IEEE 60th Annual Symposium on Foundations of Computer Science,}, IEEE Computer Society Press, Los Alamitos, CA, (2019), 273-286.

\bibitem[Ari09]{arikan2009polar} (MR2598005) [10.1109/TIT.2009.2021379]
\newblock E. Arikan,
\newblock \doititle{Channel polarization: A method for constructing capacity-achieving codes for symmetric binary-input memoryless channels},
\newblock {\em {IEEE} Trans. Inf. Theory}, \textbf{55} (2009), 3051-3073.

\bibitem[AHLL99]{ashikhmin1999coveringradius} (MR1720634) [10.1109/18.782101]
\newblock A.~E. Ashikhmin, I.~S. Honkala, T. Laihonen and S. Litsyn,
\newblock \doititle{On relations between covering radius and dual distance},
\newblock {\em {IEEE} Trans. Inf. Theory}, \textbf{45} (1999), 1808-1816.

\bibitem[Bar21]{2021bargfourierlink} (MR4184078) [10.1112/mtk.12066]
\newblock A. Barg,
\newblock \doititle{Stolarsky's invariance principle for finite metric spaces},
\newblock {\em Mathematika}, \textbf{67} (2021), 158-186.

\bibitem[BGY20]{beame2020weightodd} (MR4106611) [10.1137/18M1215104]
\newblock P. Beame, S.~O. Gharan and X. Yang,
\newblock \doititle{On the bias of reed-muller codes over odd prime fields},
\newblock {\em {SIAM} J. Discret. Math.}, \textbf{34} (2020), 1232-1247.

\bibitem[BHL12]{ben-eliezer2012weighthalf1} (MR2901241) [10.1007/s00037-011-0020-6]
\newblock I. Ben{-}Eliezer, R. Hod and S. Lovett,
\newblock \doititle{Random low-degree polynomials are hard to approximate},
\newblock {\em Comput. Complex.}, \textbf{21} (2012), 63-81.

\bibitem[BDM18]{bilyk2018stolarskyprinciple} (MR3825946) [10.1007/s00365-017-9412-4]
\newblock D. Bilyk, F. Dai and R. Matzke,
\newblock \doititle{Stolarsky principle and energy optimization on the sphere},
\newblock {\em Constructive Approximation}, \textbf{48} (2018), 31-60.

\bibitem[BLM13]{theorybook} (MR3185193) [10.1093/acprof:oso/9780199535255.001.0001]
\newblock S. Boucheron, G. Lugosi and P. Massart,
\newblock \emph{Concentration Inequalities - {A} Nonasymptotic Theory of Independence},
\newblock Oxford University Press, Oxford, 2013.

\bibitem[BK97]{Bourgain97} (MR1466334) [10.1007/s000390050015]
\newblock J. Bourgain and G. Kalai,
\newblock \doititle{Influences of variables and threshold intervals under group symmetries},
\newblock {\em Geometric} \& \emph{Functional Analysis GAFA}, \textbf{7} (1997), 438-461.

\bibitem[BGM23]{brakensiek2022listdecoding} (MR4617483) [10.1145/3564246.3585128]
\newblock J. Brakensiek, S. Gopi and V. Makam,
\newblock \doititle{Generic reed-solomon codes achieve list-decoding capacity},
\newblock {\em STOC'23—Proceedings of the 55th Annual ACM Symposium on Theory of Computing}, Association for Computing Machinery (ACM), New York, (2023), 1488-1501.

\bibitem[dW08]{2008wolffourier}
\newblock R. de~Wolf,
\newblock A brief introduction to fourier analysis on the boolean cube,
\newblock {\em Theory Comput.}, \textbf{1} (2008), 1-20.

\bibitem[Eli57]{elias1957firstlist} (MR99261)
\newblock P. Elias,
\newblock \emph{List Decoding for Noisy Channels},
\newblock Massachusetts Institute of Technology, Research Laboratory of Electronics, Cambridge, MA, 1957.

\bibitem[Gal62]{gallager1962ldpc} (MR136009) [10.1109/tit.1962.1057683]
\newblock R.~G. Gallager,
\newblock \doititle{Low-density parity-check codes},
\newblock {\em {IRE} Trans. Inf. Theory}, \textbf{8} (1962), 21-28.

\bibitem[Gal14]{binomialbound}
\newblock D. Galvin,
\newblock Three tutorial lectures on entropy and counting,
\newblock 2014.

\bibitem[GHK11]{Guruswami2011listlinear3} (MR2807801) [10.1109/TIT.2010.2095170]
\newblock V. Guruswami, J. H{\aa}stad and S. Kopparty,
\newblock \doititle{On the list-decodability of random linear codes},
\newblock {\em {IEEE} Trans. Inf. Theory}, \textbf{57} (2011), 718-725.

\bibitem[GHSZ02]{guruswami2002listlinear1} (MR1907395) [10.1109/18.995539]
\newblock V. Guruswami, J. H{\aa}stad, M. Sudan and D. Zuckerman,
\newblock \doititle{Combinatorial bounds for list decoding},
\newblock {\em {IEEE} Trans. Inf. Theory}, \textbf{48} (2002), 1021-1034.

\bibitem[GR08]{Guruswami2008explicitlist} (MR2446745) [10.1109/TIT.2007.911222]
\newblock V. Guruswami and A. Rudra,
\newblock \doititle{Explicit codes achieving list decoding capacity: Error-correction with optimal redundancy},
\newblock {\em {IEEE} Trans. Inf. Theory}, \textbf{54} (2008), 135-150.

\bibitem[GX12]{Guruswami2012explicitlist2} (MR2961515) [10.1145/2213977.2214009]
\newblock V. Guruswami and C. Xing,
\newblock \doititle{Folded codes from function field towers and improved optimal rate list decoding},
\newblock {\em STOC'12—Proceedings of the 2012 ACM Symposium on Theory of Computing}, Association for Computing Machinery (ACM), New York, (2012), 339-350.

\bibitem[Haz23]{hazla2022exponentiallist}
\newblock J. Hazla,
\newblock Optimal list decoding from noisy entropy inequality,
\newblock In {\em {IEEE} International Symposium on Information Theory, {ISIT} 2023, Taipei, Taiwan, June 25-30, 2023}, {IEEE}, (2023), 15-18.

\bibitem[HSS21]{hazla2021polyclose} (MR4398934) [10.1145/3406325.3451015]
\newblock J. Hazla, A. Samorodnitsky and O. Sberlo,
\newblock \doititle{On codes decoding a constant fraction of errors on the {BSC}},
\newblock In Samir Khuller and Virginia~Vassilevska Williams, editors, {\em {STOC} '21: 53rd Annual {ACM} {SIGACT} Symposium on Theory of Computing, Virtual Event, Italy, June 21-25, 2021}, {ACM}, (2021), 1479-1488.

\bibitem[HRW17]{Hemenway2017explicitlist4} (MR3734230) [10.1109/FOCS.2017.27]
\newblock B. Hemenway, N.Ron{-}Zewi and M. Wootters,
\newblock \doititle{Local list recovery of high-rate tensor codes {\&} applications},
\newblock  {\em 58th Annual IEEE Symposium on Foundations of Computer Science—FOCS 2017}, IEEE Computer Society, Los Alamitos, CA, (2017), 204-215.

\bibitem[IS98]{ismail1998krawtchouk2} (MR1659129) [10.1016/S0377-0427(98)00183-6]
\newblock M.~E. H. Ismail and P. Simeonov,
\newblock \doititle{Strong asymptotics for krawtchouk polynomials},
\newblock {\em Journal of Computational and Applied Mathematics}, \textbf{100} (1998), 121-144.

\bibitem[IU20]{ivanov2020review2transitive}
\newblock K. Ivanov and R.~L. Urbanke,
\newblock Capacity-achieving codes: A review on double transitivity,
\newblock {\em CoRR}, abs/2010.15453, 2020.

\bibitem[KKL88]{kahn1988kkl} [10.1109/SFCS.1988.21923]
\newblock J. Kahn, G. Kalai and N. Linial,
\newblock \doititle{The influence of variables on boolean functions},
\newblock In {\em 29th Annual Symposium on Foundations of Computer Science, White Plains, New York, USA, 24-26 October 1988}, pages {IEEE} Computer Society, (1988), 68-80. 

\bibitem[KL95a]{kallai1995krawtchouk1} [10.1109/18.412711]
\newblock G. Kalai and N. Linial,
\newblock \doititle{On the distance distribution of codes},
\newblock {\em {IEEE} Trans. Inf. Theory}, \textbf{41} (1995), 1467-1472.

\bibitem[KT70]{kasami1970distance2} (MR277300) [10.1109/tit.1970.1054545]
\newblock T. Kasami and N. Tokura,
\newblock \doititle{On the weight structure of reed-muller codes},
\newblock {\em {IEEE} Trans. Inf. Theory}, \textbf{16} (1970), 752-759.

\bibitem[KTA76]{kasami1976distance2half} (MR401324) [10.1016/S0019-9958(76)90355-7]
\newblock T. Kasami, N. Tokura and S. Azumi,
\newblock \doititle{On the weight enumeration of weights less than 2.5d of reed-muller codes},
\newblock {\em Information and Control}, \textbf{30} (1976), 380-395.

\bibitem[KLP12]{kaufman2012constantdegree} (MR2952509) [10.1109/TIT.2012.2184841]
\newblock T. Kaufman, S. Lovett and E. Porat,
\newblock \doititle{Weight distribution and list-decoding size of reed-muller codes},
\newblock {\em {IEEE} Trans. Inf. Theory}, \textbf{58} (2012), 2689-2696.

\bibitem[Kop15]{Kopparty2015explicitlist3} (MR3354609) [10.4086/toc.2015.v011a005]
\newblock S. Kopparty,
\newblock \doititle{List-decoding multiplicity codes},
\newblock {\em Theory Comput.}, \textbf{11} (2015), 149-182.

\bibitem[KL95b]{krasikov1995weight} (MR1366331) [10.1109/18.412690]
\newblock I. Krasikov and S. Litsyn,
\newblock \doititle{On the accuracy of the binomial approximation to the distance distribution of codes},
\newblock {\em {IEEE} Trans. Inf. Theory}, \textbf{41} (1995), 1472-1474.

\bibitem[KL98]{krasikov1998dualdistancetoweight} (MR1601560) [10.1023/A:1008206125050]
\newblock I. Krasikov and S. Litsyn,
\newblock \doititle{Bounds on spectra of codes with known dual distance},
\newblock {\em Des. Codes Cryptogr.}, \textbf{13} (1998), 285-297.

\bibitem[KL99]{1999surveykrawtchouk} (MR1816399) [10.1090/dimacs/056/16]
\newblock I. Krasikov and S. Litsyn,
\newblock \doititle{Survey of binary krawtchouk polynomials},
\newblock In Alexander Barg and Simon Litsyn, editors, {\em Codes and Association Schemes, Proceedings of a {DIMACS} Workshop, Piscataway, New Jersey, USA, November 9-12, 1999}, volume~56 of {\em {DIMACS} Series in Discrete Mathematics and Theoretical Computer Science}, {DIMACS/AMS}, (1999), 199-211.

\bibitem[KKM{\etalchar{+}}16]{kudekar2016erasure} (MR3536604) [10.1145/2897518.2897584]
\newblock S. Kudekar, S. Kumar, M. Mondelli, H.~D. Pfister, E. Sasoglu and R.~L. Urbanke,
\newblock \doititle{Reed-muller codes achieve capacity on erasure channels},
\newblock In Daniel Wichs and Yishay Mansour, editors, {\em Proceedings of the 48th Annual {ACM} {SIGACT} Symposium on Theory of Computing, {STOC} 2016, Cambridge, MA, USA, June 18-21, 2016}, {ACM}, (2016), 658-669.

\bibitem[KRU13]{kudekar2013ldpc} (MR3142263) [10.1109/TIT.2013.2280915]
\newblock S. Kudekar, T. Richardson and R.~L. Urbanke,
\newblock \doititle{Spatially coupled ensembles universally achieve capacity under belief propagation},
\newblock {\em {IEEE} Trans. Inf. Theory}, \textbf{59} (2013), 7761-7813.

\bibitem[LW18]{Li2018listlinear2} (MR3857288)
\newblock R. Li and M. Wootters,
\newblock Improved list-decodability of random linear binary codes,
\newblock In Eric Blais, Klaus Jansen, Jos{\'{e}} D.~P. Rolim, and David Steurer, editors, {\em Approximation, Randomization, and Combinatorial Optimization. Algorithms and Techniques, {APPROX/RANDOM} 2018, August 20-22, 2018 - Princeton, NJ, {USA}}, volume 116 of {\em LIPIcs}, Schloss Dagstuhl - Leibniz-Zentrum f{\"{u}}r Informatik, (2018), Art. No. 50, 19 pp.

\bibitem[GHHP34]{hardy1934inequalities}
\newblock J. E.~Littlewood, G, H,~Hardy and G, Polya,
\newblock {\em Inequalities},
\newblock Cambridge University Press, 1934.

\bibitem[LMS{\etalchar{+}}97]{luby1997ldpc2} [10.1145/258533.258573]
\newblock M. Luby, M. Mitzenmacher, M.~A. Shokrollahi, D.~A. Spielman and V. Stemann,
\newblock \doititle{Practical loss-resilient codes},
\newblock In Frank~Thomson Leighton and Peter~W. Shor, editors, {\em Proceedings of the Twenty-Ninth Annual {ACM} Symposium on the Theory of Computing, El Paso, Texas, USA, May 4-6, 1997}, {ACM}, (1997), 150-159.

\bibitem[Mac63]{Macwilliams1963identity} (MR149978) [10.1002/j.1538-7305.1963.tb04003.x]
\newblock J. Macwilliams,
\newblock \doititle{A theorem on the distribution of weights in a systematic code},
\newblock {\em Bell System Technical Journal}, \textbf{42} (1963), 79-94.

\bibitem[MS77]{1977bookkrawtchouk} (MR465509)
\newblock F. MacWilliams and N. Sloane,
\newblock {\em The Theory of Error Correcting Codes},
\newblock North-Holland Publishing Company, 1977.

\bibitem[MRR{\etalchar{+}}20]{Mosheiff2020explicitlist5} (MR4232058) [10.1109/FOCS46700.2020.00050]
\newblock J. Mosheiff, N. Resch, N. Ron{-}Zewi, S. Silas and M. Wootters,
\newblock \doititle{{LDPC} codes achieve list decoding capacity},
\newblock In {\em 61st {IEEE} Annual Symposium on Foundations of Computer Science, {FOCS} 2020, Durham, NC, USA, November 16-19, 2020}, {IEEE}, (2020), 458-469.

\bibitem[Pol19]{polyanskiy2019krawtchouk3} (MR3945798) [10.1137/15M1046575]
\newblock Y. Polyanskiy,
\newblock \doititle{Hypercontractivity of spherical averages in hamming space},
\newblock {\em {SIAM} J. Discret. Math.}, \textbf{33} (2019), 731-754.

\bibitem[RP24]{reeves2021bitcapacity} [10.1109/TIT.2023.3286452]
\newblock G. Reeves and H.~D. Pfister,
\newblock \doititle{Reed-muller codes on {BMS} channels achieve vanishing bit-error probability for all rates below capacity},
\newblock {\em {IEEE} Trans. Inf. Theory}, \textbf{70} (2024), 920-949.

\bibitem[Sam20a]{samorodnitsky2020weightimproved} (MR4069691) [10.1109/TIT.2019.2944698]
\newblock A. Samorodnitsky,
\newblock \doititle{An improved bound on $l_q$ norms of noisy functions},
\newblock preprint, 2020, \arXiv{2010.02721}.

\bibitem[Sam20b]{samorodnitsky2020weightboundhalf} (MR4069691) [10.1109/TIT.2019.2944698]
\newblock A. Samorodnitsky,
\newblock \doititle{An upper bound on $l_q$ norms of noisy functions},
\newblock {\em {IEEE} Trans. Inf. Theory}, \textbf{66} (2020), 742-748.

\bibitem[Sam22]{samorodnitsky2022undetected}
\newblock A. Samorodnitsky,
\newblock On some properties of random and pseudorandom codes,
\newblock {\em CoRR}, abs/2206.05135, 2022.

\bibitem[SS20]{sberlo2020weightbound} (MR4141264)
\newblock O. Sberlo and A. Shpilka,
\newblock On the performance of reed-muller codes with respect to random errors and erasures,
\newblock In Shuchi Chawla, editor, {\em Proceedings of the 2020 {ACM-SIAM} Symposium on Discrete Algorithms, {SODA} 2020, Salt Lake City, UT, USA, January 5-8, 2020}, {SIAM}, (2020), 1357-1376.

\bibitem[Sha48]{shannon1948entropy} (MR26286) [10.1002/j.1538-7305.1948.tb01338.x]
\newblock C.~E. Shannon,
\newblock \doititle{A mathematical theory of communication},
\newblock {\em Bell Syst. Tech. J.}, \textbf{27} (1948), 379-423.

\bibitem[Skr19]{skriganov2019stolarskyhomogeneous} (MR3932579) [10.1112/S0025579319000019]
\newblock M.~Skriganov,
\newblock \doititle{Point distributions in two-point homogeneous spaces},
\newblock {\em Mathematika}, \textbf{65} (2019), 557-587.

\bibitem[SB70]{sloane1970degree2} (MR274196) [10.1109/tit.1970.1054553]
\newblock N. J.~A. Sloane and E.~R. Berlekamp,
\newblock \doititle{Weight enumerator for second-order reed-muller codes},
\newblock {\em {IEEE} Trans. Inf. Theory}, \textbf{16} (1970), 745-751.

\bibitem[Tal94]{Talagrand94} (MR1303654)
\newblock M. Talagrand,
\newblock On russo's approximate zero-one law,
\newblock {\em The Annals of Probability}, \textbf{22} (1994), 1576-1587.

\end{thebibliography}
\end{document}